\documentclass{amsart}

\usepackage{enumerate, amsmath, amsfonts, amssymb, amsthm, wasysym, graphics, graphicx, xcolor, url, hyperref, hypcap, multirow, a4wide}
\hypersetup{colorlinks=true, citecolor=darkblue, linkcolor=darkblue}
\usepackage[bottom]{footmisc}
\graphicspath{{figures/}}

%%%%%%%%%%%%%%%%%%%%%%%%%%%%%%

\title[Quasi-configurations: Building Blocks for Point\,--\,Line Configurations]{Quasi-configurations: \\ Building Blocks for Point\,--\,Line Configurations}

\thanks{VP was partially supported by the spanish MICINN grant MTM2011-22792 and by the French ANR grant EGOS 12 JS02 002 01.}

\author{J\"urgen Bokowski}
\address[JB]{Technische Universit\"at Darmstadt}
\email{juergen.bokowski@gmail.com}
\urladdr{http://wwwopt.mathematik.tu-darmstadt.de/~bokowski/}

\author{Vincent Pilaud}
\address[VP]{CNRS \& LIX, \'Ecole Polytechnique, Palaiseau}
\email{vincent.pilaud@lix.polytechnique.fr}
\urladdr{http://www.lix.polytechnique.fr/~pilaud/}

%%%%%%%%%%%%%%%%%%%%%%%%%%%%%%

% theorems
\newtheorem{result}{Result}
\newtheorem{proposition}[result]{Proposition}
\newtheorem{corollary}[result]{Corollary}

\newtheorem{question}[result]{Question}

\theoremstyle{definition}
\newtheorem{example}[result]{Example}

% mathematical commands

\newcommand{\N}{\mathbb{N}}

\newcommand{\bbP}{\mathbb{P}}

\newcommand{\bP}{\mathbf{P}}
\newcommand{\bL}{\mathbf{L}}

\newcommand{\set}[2]{\left\{ #1 \;\middle|\; #2 \right\}}

\newcommand{\eqdef}{\mbox{\,\raisebox{0.2ex}{\scriptsize\ensuremath{\mathrm:}}\ensuremath{=}\,}} % := 

% abbreviations
\newcommand{\ie}{\textit{i.e.}} % id est

% hyphenations
\hyphenation{}

% programs

% other commands
\definecolor{darkblue}{rgb}{0,0,0.7}
\newcommand{\darkblue}{\color{darkblue}}
\newcommand{\defn}[1]{\emph{\darkblue #1}}
\graphicspath{{figures/}}
\renewcommand{\paragraph}[1]{\vspace{.3cm}\noindent{\sc #1} --- }

\usepackage{todonotes}

%%%%%%%%%%%%%%%%%%%%%%%%%%%%%%
%%%%%%%%%%%%%%%%%%%%%%%%%%%%%%

\begin{document}

\begin{abstract}
We study point\,--\,line incidence structures and their properties in the projective plane. Our motivation is the problem of the existence of $(n_4)$ configurations, still open for few remaining values of~$n$. Our approach is based on quasi-configurations: point\,--\,line incidence structures where each point is incident to at least~$3$ lines and each line is incident to at least~$3$ points. We investigate the existence problem for these quasi-configurations, with a particular attention to~$3|4$-configurations where each element is $3$- or $4$-valent. We use these quasi-configurations to construct the first $(37_4)$ and $(43_4)$ configurations. The existence problem of finding $(22_4)$, $(23_4)$, and~$(26_4)$ configurations remains open.

\medskip
\noindent
\textsc{Keywords.} projective arrangements, point\,--\,line incidence structure, $(n_k)$ configurations

\medskip
\noindent
\textsc{MSC Classes.} 52C30
\end{abstract}

\vspace*{-1.1cm}

\maketitle

\vspace{-.5cm}

\section{Introduction}
\label{sec:introduction}

\enlargethispage{.3cm}
A \defn{geometric $(n_k)$ configuration} is a collection of~$n$ points and~$n$ lines in the projective plane such that each point lies on~$k$ lines and each line contains~$k$ points. We recommend our reader to consult Gr\"unbaum's book~\cite{Grunbaum1} for a comprehensive presentation and an historical perspective on these configurations. The central problem studied in this book is to determine for a given $k$ those numbers $n$ for which there exist geometric $(n_k)$ configurations. The answer is completely known for $k = 3$ (geometric $(n_3)$ configurations exist if and only if~$n \ge 9$), partially solved for~$n = 4$ (geometric $(n_4)$ configurations exist if and only if~$n = 18$ or~$n \ge 20$ with a finite list of possible exceptions), and wide open for~$k > 4$. Our contribution concerns~$k = 4$, where we provide solutions for two former open cases: there exist geometric $(37_4)$ and $(43_4)$ configurations. Moreover, we study building blocks for constructing geometric $(n_k)$ configurations that might be of some help for clarifying the final open cases $(22_4)$, $(23_4)$, and~$(26_4)$.  Many aspects of our presentation appeared during our investigation of the case $(19_4)$ in which there is no geometric $(19_4)$ configuration, see~\cite{BokowskiPilaud, BokowskiPilaud2}. 

The approach of this paper is to construct geometric $(n_4)$ configurations from smaller building blocks. For example, Gr\"unbaum's geometric $(20_4)$ configuration~\cite{Grunbaum5} can be constructed by superposition of two geometric $(10_3)$ configurations as illustrated in Figure~\ref{fig:splittings20_4}. To extend this kind of construction, we study an extended notion of point\,--\,line configurations, where incidences are not regular but still prescribed.

\begin{figure}
  \centerline{\includegraphics[width=.7\textwidth]{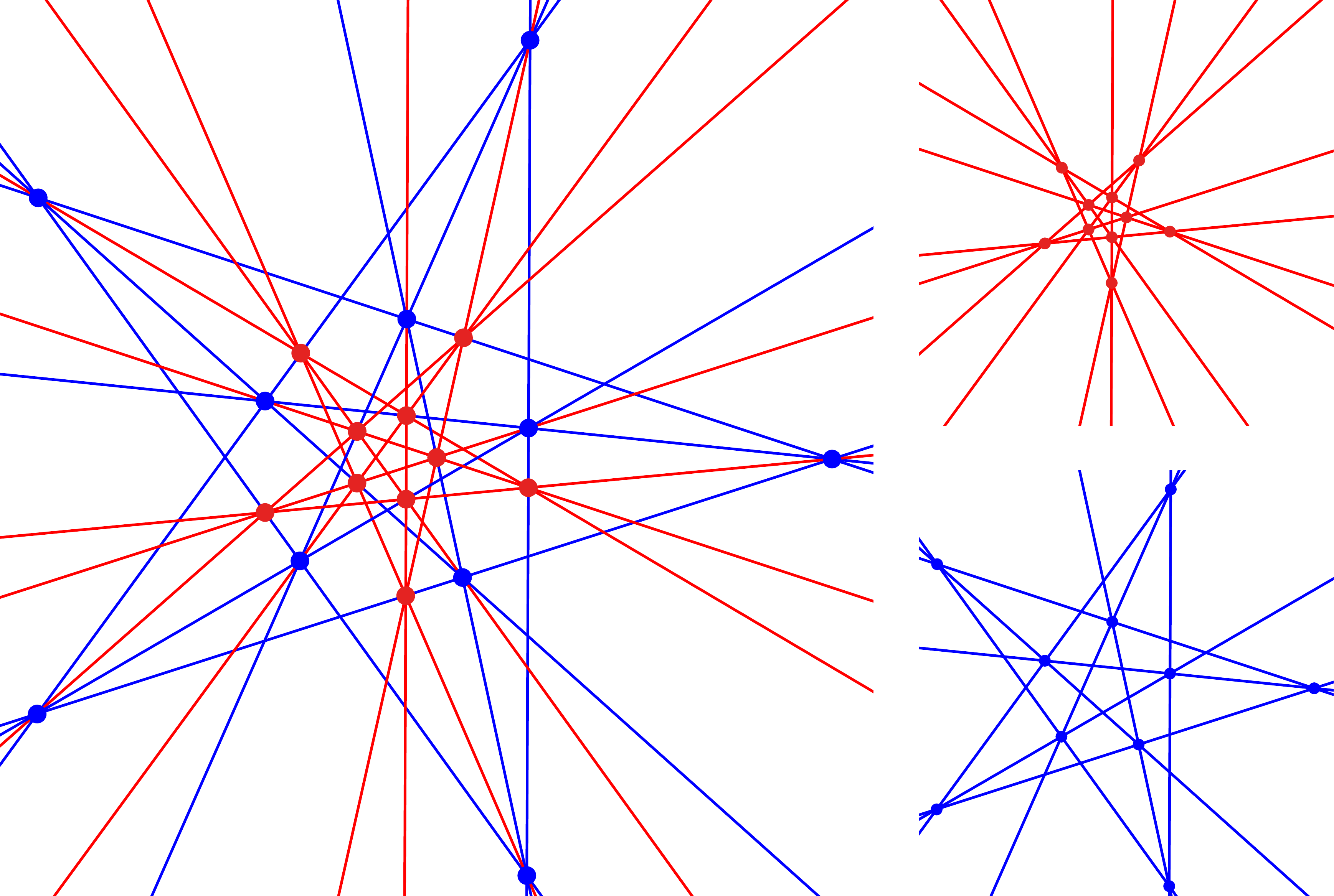}}
  \caption{Splitting Gr\"unbaum's geometric $(20_4)$ configuration~\cite{Grunbaum5} into two $(10_3)$ configurations.}
  \label{fig:splittings20_4}
\end{figure}

%%%%%%%%%%%%%%%%%%%%%%%%%%%%%%

\subsection{Point\,--\,line incidence structures}

We define a \defn{point\,--\,line incidence structure} as a set~$P$ of \defn{points} and a set~$L$ of \defn{lines} together with a point\,--\,line \defn{incidence relation}, where two points of~$P$ can be incident with at most one line of~$L$ and two lines of~$L$ can be incident with at most one point of~$P$. Throughout the paper, we only consider \defn{connected} incidence structures, where any two elements of~$P \sqcup L$ are connected via a path of incident elements.

For a point\,--\,line incident structure~$(P,L)$, we denote by~$p_i$ the number of points of~$P$ contained in~$i$ lines of~$L$ and similarly by~$\ell_j$ the number of lines of~$L$ containing~$j$ points of~$P$. We find it convenient to encode these incidence numbers into a pair of polynomials~$(\bP(x), \bL(y))$, called the \defn{signature} of~$(P,L)$, and defined by
\[
\bP(x) \eqdef \sum\nolimits_i p_i x^i \qquad \text{and} \qquad \bL(y) \eqdef \sum\nolimits_j \ell_j y^j.
\]
For example, Figure~\ref{fig:8282} shows a point\,--\,line incidence structure with signature~$(8x^3+2x^4, 8y^3+2y^4)$. With these notations, the number of points and lines are given by~$|P| = \bP(1)$ and~$|L| = \bL(1)$, and the number of point\,--\,line incidences is~$|\set{(p,\ell) \in P \times L}{p \in \ell}| = \bP'(1) = \bL'(1)$.

We distinguish three different levels of point\,--\,line incidence structures, in increasing generality:
\begin{description}
\item[\rm \defn{Geometric}] Points and lines are ordinary points and lines in the real projective plane~$\bbP$.
\item[\rm \defn{Topological}] Points are ordinary points in~$\bbP$, but lines are \defn{pseudolines}, \ie, non-separating simple closed curves of~$\bbP$ which cross pairwise precisely once.
\item[\rm \defn{Combinatorial}] Just an abstract incidence structure $(P,L)$ as described above, with no additional geometric structure.  
\end{description}
In this paper, we are mainly interested in the geometric level. We therefore omit the word geometric in what follows unless we have to distinguish different levels.

%%%%%%%%%%%%%%%%%%%%%%%%%%%%%%

\subsection{$(n_k)$ configurations}

One of the main problems in the theory of point\,--\,line incidence structures is to clarify the existence of regular point\,--\,line incidence structures. A \defn{$k$-configuration} is a point\,--\,line incidence structure~$(P,L)$ where each point of~$P$ is contained in~$k$ lines of~$L$ and each line of~$L$ contains~$k$ points of~$P$. In such a configuration, the number of points equals the number of lines, and thus it has signature~$(n x^k, n y^k)$. If we want to specify the number of points and lines, we call it an \defn{$(n_k)$ configuration}. We refer to the recent monographs of Gr\"unbaum~\cite{Grunbaum1} and Pisanski and Servatius~\cite{PisanskiServatius} for comprehensive presentations of these objects. Classical examples of regular configurations are Pappus' and Desargues' configurations, which are respectively $(9_3)$ and $(10_3)$ configurations. In the study of the existence of $(n_4)$ configurations there are still a few open cases. Namely, it is known that (geometric) $(n_4)$-configurations exists if and only if~$n = 18$ or~$n \ge 20$, with the possible exceptions of~$n = 22, 23, 26, 37$ and~$43$ \cite{Grunbaum2, BokowskiSchewe2, BokowskiPilaud2}. Different methods have been used to obtain the current results on the existence of $4$-configurations:
\begin{enumerate}[(i)]
\item For~$n \le 16$, Bokowski and Schewe~\cite{BokowskiSchewe1} used a counting argument based on Euler's formula to prove that there exist no $(n_4)$ configuration, even topological.
\item For small values of~$n$, one can search for all possible $(n_4)$ configurations. For~$n=17$ or~$18$, one can first enumerate all combinatorial $(n_4)$ configurations and search for geometric realizations among them. This approach was used by Bokowski and Schewe~\cite{BokowskiSchewe2} to show that there is no $(17_4)$ configuration and to produce a first $(18_4)$ configuration. Another approach, proposed in~\cite{BokowskiPilaud}, is to enumerate directly all topological $(n_4)$ configurations, and to search for geometric realizations among this restricted family. In this way, we showed that there are precisely two $(18_4)$ configurations, that of~\cite{BokowskiSchewe2} and another one~\cite{BokowskiPilaud}, see Figure~\ref{fig:splittings18_4}. For~$n=19$, we obtained in~\cite{BokowskiPilaud} all $4\,028$ topological $(19_4)$ configurations and the study of their realizability has led to the result that there is no geometric $(19_4)$ configuration~\cite{BokowskiPilaud2}.
\item For larger values of~$n$, one cannot expect a complete classification of $(n_4)$ configurations. However, one can construct families of examples of $4$-configurations. One of the key ingredients for such constructions is the use of symmetries. See Figure~\ref{fig:splittings20_4} for the smallest example obtained in this way, and refer to the detailed presentation in Gr\"unbaum's recent monograph~\cite{Grunbaum1}.
\item Finally, Bokowski and Schewe introduced in~\cite{BokowskiSchewe2} a method to produce $(n_4)$ configurations from deficient configurations. It consists in finding two point\,--\,line incidence structures $(P,L)$ and~$(P',L')$ of respective signatures~$(ax^3+bx^4,cy^3+dy^4)$ and~${(cx^3+ex^4,ay^3+fy^4)}$, where $a+b+c+e = a+c+d+f = n$, and a projective transformation which sends the $3$-valent points of~$P$ to points contained in a $3$-valent line of~$L'$, and at the same time the $3$-valent lines of~$L$ to lines containing a $3$-valent point of~$P'$. This method was used to obtain the first examples of~$(29_4)$ and~$(31_4)$ configurations.
\end{enumerate}

In this paper, we are interested in this very last method described above. We are going to study deficient configurations (see the notion of quasi-configuration and $3|4$-configuration in the next subsection) for the use of them as building blocks for configurations. Our study has led in particular to first examples of~$(37_4)$ and $(43_4)$ configurations. Thus the remaining undecided cases for the existence of $(n_4)$ configurations are now only the cases $n = 22, 23$, and~$26$.

%%%%%%%%%%%%%%%%%%%%%%%%%%%%%%

\subsection{Quasi-configurations}

A \defn{quasi-configuration}~$(P,L)$ is a point\,--\,line incidence structure in which each point is contained in at least~$3$ lines and each line contains at least~$3$ points of~$P$. In other words, the signature~$(\bP,\bL)$ of~$(P,L)$ satisfies~$x^3 \,|\, \bP(x)$ and~$y^3 \,|\, \bL(y)$.  The term \mbox{``quasi-configuration''} for this concept was suggested by Gr\"unbaum to the authors. As observed above, these quasi-configurations can sometimes be used as building blocks for larger point\,--\,line incidence structures.

In this paper, we investigate in particular \defn{$3|4$-configurations}, where each point of~$P$ is contained in~$3$ or~$4$ lines of~$L$ and each line of~$L$ contains~$3$ or~$4$ points of~$P$. In other words, $3|4$-configurations are point\,--\,line incidence structures whose signature is of the form~$(ax^3 + bx^4, cx^3 + dx^4)$ for some~$a,b,c,d \in \N$ satisfying~$3a+4b = 3c+4d$. Note that their numbers of points and lines do not necessarily coincide. If it is the case, \ie, if $a+b = c+d = n$, we speak of an \defn{$(n_{3|4})$ configuration}. In this case, $a=c$ and~$b=d$, the number of points and lines is~$n = a+b = c+d$ and the number of incidences is~$3a+4b = 3c+4d$.

We think of an $(n_{3|4})$ configuration as a deficient $(n_4)$ configuration. A good measure on~$(n_{3|4})$ configurations is the number of missing incidences~$a$. We say that an $(n_{3|4})$ configuration is \defn{optimal} if it contains the maximal number of point\,--\,line incidences among all $(n_{3|4})$ configurations. One objective is to study and classify optimal $(n_{3|4})$ configurations for small values of~$n$. 

\begin{example}
\label{exm:8282}
Figure~\ref{fig:8282} shows a point\,--\,line incidence structure with signature~${(8x^3+2x^4, 8y^3+2y^4)}$. It is a $10_{3|4}$-configuration: the $3$-valent elements are colored red while the $4$-valent elements are colored blue. We will see in Figure~\ref{fig:optimal} that this $3|4$-configuration is not optimal.

\begin{figure}[h]
  \centerline{\includegraphics[width=.7\textwidth]{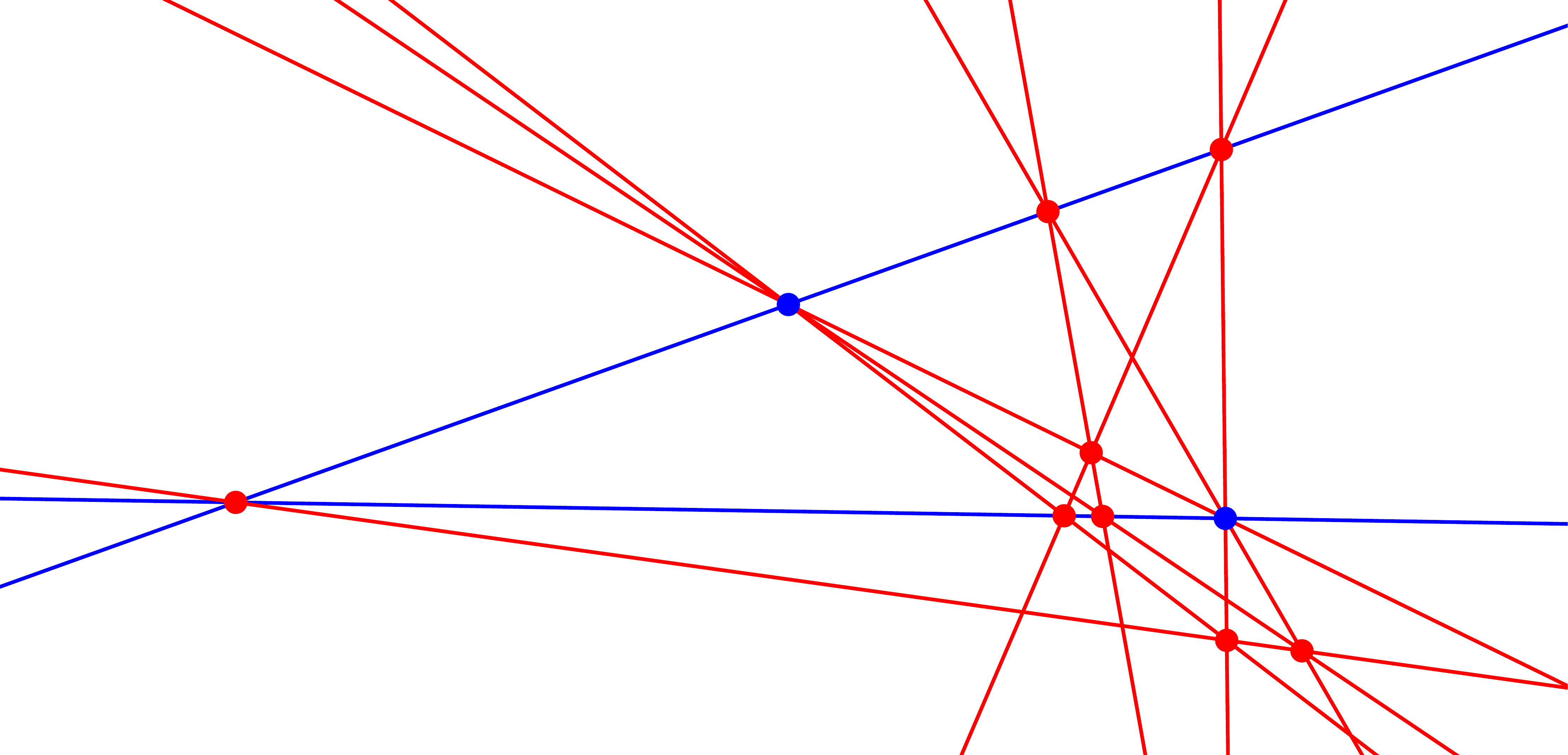}}
%  \bigskip
%  \centerline{
%	\begin{tabular}{c|cccccccccccccccccc}
%	lines 	& a & b & c & d & e & f & g & h & i & j \\
%	\hline
%			& B & A & A & C & D & D & B & B & A & A \\
% 	points	& C & C & B & E & G & H & E & F & E & F \\
% 	in lines	& I & G & D & F & I & J & H & G & J & I \\
%   			& J & H 
%	\end{tabular}
%  }
  \caption{A quasi-configuration with signature~$(8x^3+2x^4, 8y^3+2y^4)$.}
  \label{fig:8282}
\end{figure}
\end{example}

%%%%%%%%%%%%%%%%%%%%%%%%%%%%%%

\subsection{Overview}

The paper is divided into two parts. In Section~\ref{sec:constructions}, we illustrate how quasi-configurations (in particular $3|4$-configurations) can be used as building blocks to construct $(n_4)$ configurations, and we obtain in particular examples of $(37_4)$ and $(43_4)$ configurations. In Section~\ref{sec:obstructions}, we present a counting obstruction for the existence of topological quasi-configurations, and we study optimal $(n_{3|4})$ configurations with few points~and~lines.

%%%%%%%%%%%%%%%%%%%%%%%%%%%%%%
%%%%%%%%%%%%%%%%%%%%%%%%%%%%%%

\section{Constructions}
\label{sec:constructions}

In this section, we discuss different ways to obtain new point\,--\,line incidence structures from old ones. We are in particular interested in the construction of new quasi-configurations from existing quasi-configurations. We use these techniques to provide the first $(37_4)$ and $(43_4)$ configurations.

%%%%%%%%%%%%%%%%%%%%%

\subsection{Operations on point\,--\,line incidence structures}

To construct new point\,--\,line incidence structures from old ones, we will use the following operations, illustrated in Section~\ref{subsec:examples}:
\begin{description}
\item[Deletion] Deleting elements from a point\,--\,line incidence structure yields a smaller incidence structure. Note that deletions do not necessarily preserve connectedness or quasi-configurations. We can however use deletions in~$4$-configurations to construct $3|4$-configu\-rations if no remaining element is incident to two deleted elements.
\item[Addition] As illustrated by the example of Gr\"unbaum's $(20_4)$ configuration~\cite{Grunbaum5} in Figure~\ref{fig:splittings20_4}, certain point\,--\,line incidence structures can be obtained as the disjoint union of two smaller incidence structures~$(P,L)$ and~$(P',L')$. In particular, we obtain an $(n_4)$ configuration if $(P,L)$ and~$(P',L')$ are $3|4$-configurations, if each $3$-valent element of~$(P,L)$ is incident to precisely one $3$-valent element of~$(P',L')$ and conversely, and if no other incidences appear.
\item[Splitting] The reverse operation of addition is splitting: given a point\,--\,line incidence structure, we can split it into two smaller incidence structures. We can require additionally the two resulting incidence structures to be quasi-configurations or even regular configurations. For example, the two geometric $(18_4)$ configurations~\cite{BokowskiSchewe2, BokowskiPilaud} as well as Gr\"unbaum's $(20_4)$ configuration~\cite{Grunbaum5} are splittable into $3|4$-configurations, see Figures~\ref{fig:splittings20_4} and~\ref{fig:splittings18_4}.
\item[Superposition] Slightly more general than addition is the superposition, where we allow the two point\,--\,line incidence structures~$(P,L)$ and~$(P',L')$ to share points or lines. For example, we can superpose two $2$-valent vertices to make one $4$-valent vertex. This idea is used in our construction of $(37_4)$ and $(43_4)$ configurations below.
\end{description}

\begin{figure}[h]
  \centerline{\includegraphics[width=.5\textwidth]{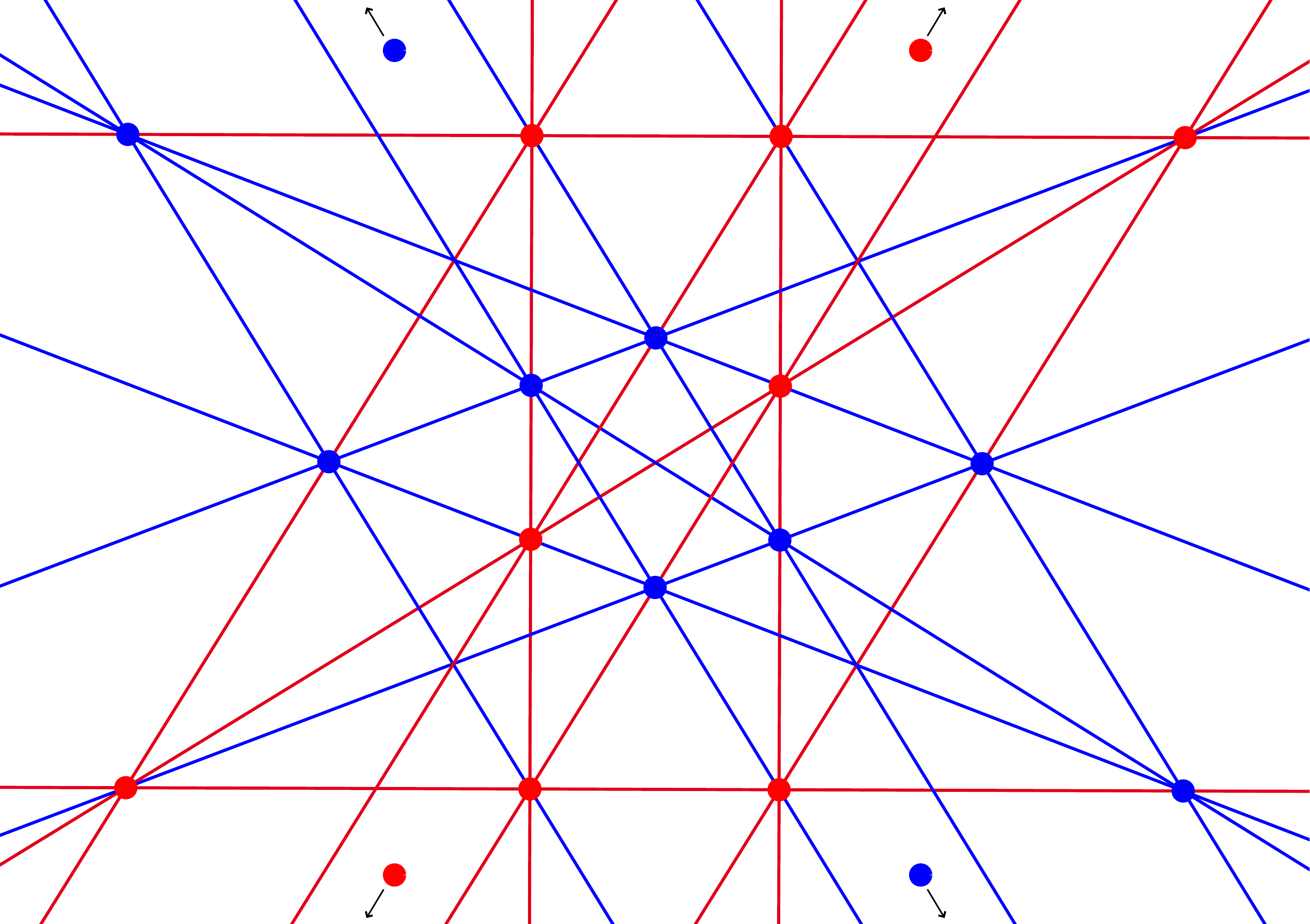}\hspace{.2cm}\includegraphics[width=.5\textwidth]{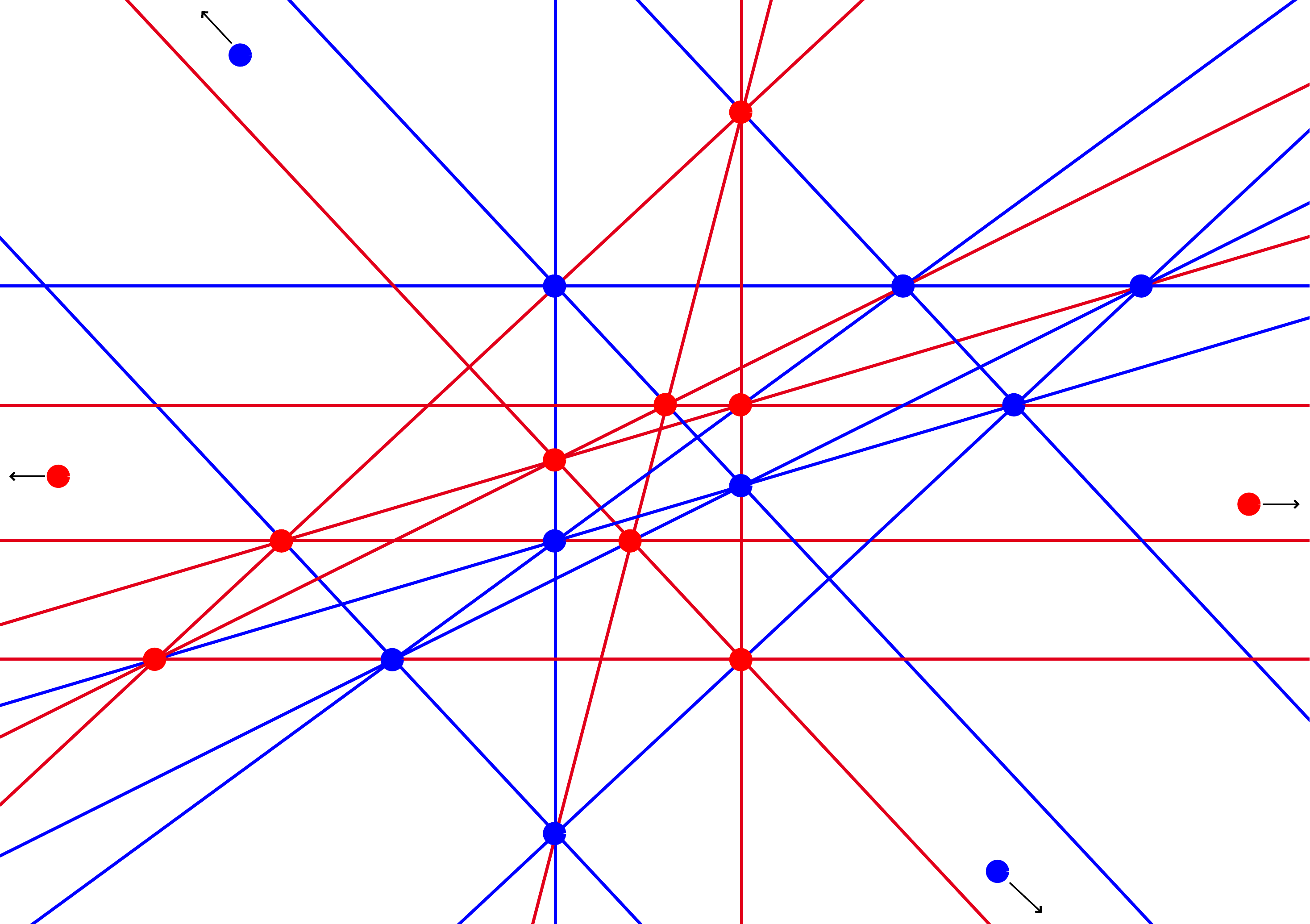}}
  \caption{Splittings of the two geometric $(18_4)$ configurations~\cite{BokowskiSchewe2, BokowskiPilaud} into two $(9_{3|4})$ configurations. The rightmost $(18_4)$ configuration is even split into two $(9_3)$ configurations. In both pictures, the points which seem isolated are in fact at infinity in the direction pointed by the corresponding arrow, and are incident to the $4$ lines parallel to that direction.}
  \label{fig:splittings18_4}
\end{figure}

%%%%%%%%%%%%%%%%%%%%%

\subsection{Examples of constructions}
\label{subsec:examples}

We now illustrate the previous operations and produce $4$-configurations from smaller point\,--\,line incidence structures. We start with a simple example.

\begin{example}[A $(38_4)$ configuration]
\label{exm:38_4}
It was shown in~\cite{BokowskiPilaud2} that no topological $(19_4)$ configuration can be geometrically realized with points and lines in the projective plane. However, Figure~\ref{fig:38_4}\,(left) shows a geometric realization of a topological $(19_4)$ configuration where one line has been replaced by a circle. Forgetting this circle, we obtain a $3|4$-configuration with signature~$(15x^4+4x^3,18x^4)$. We take two opposite copies of this $3|4$-configuration (colored purple and red in Figure~\ref{fig:38_4}\,(right)) and add two lines (colored green in Figure~\ref{fig:38_4}\,(right)) each incident to two points in each copy. We obtain a $(38_4)$ configuration.

\begin{figure}[h]
  \centerline{\includegraphics[width=\textwidth]{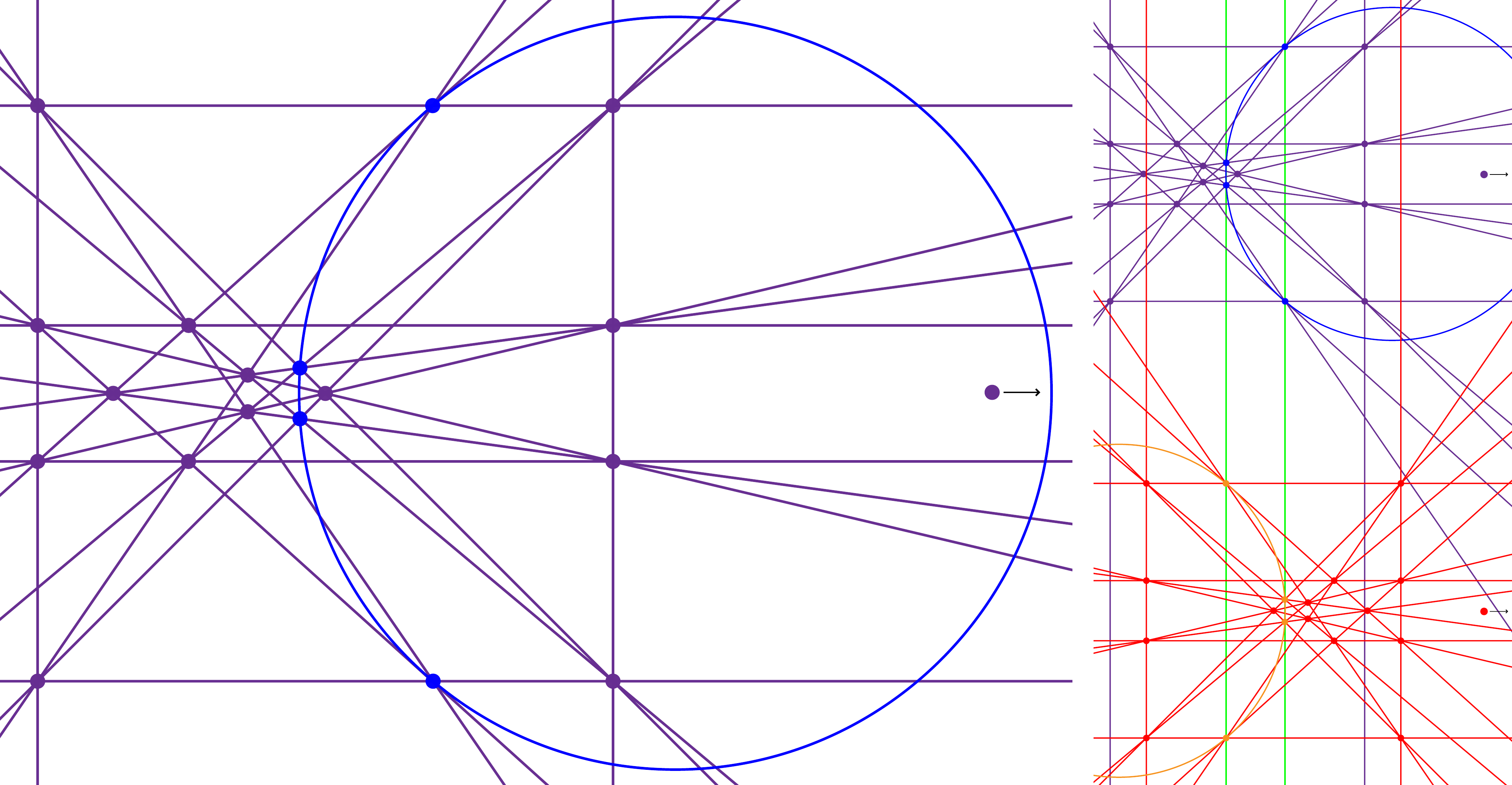}}
  \caption{(Left) A geometric realization of a topological $(19_4)$ configuration where one line has been replaced by a circle. (Right) A $(38_4)$ configuration built from two copies of this incidence structure. The construction is explained in full detail in Example~\ref{exm:38_4}.}
  \label{fig:38_4}
\end{figure}
\end{example}

Using similar ideas, we now observe that it is always possible to produce a $4$-configuration from any $3|4$-configuration.

\begin{example}[Any $3|4$-configuration generates a $4$-configuration]
From a $3|4$-configuration with signature~$(ax^3 + bx^4, cy^3 + dy^4)$, we construct an $(n_4)$ configuration with $n = 16a + 16b + 4c = 4a + 16c + 16d$ as follows:
\begin{enumerate}[(i)]
\item We take four translated copies of the $3|4$-configuration and add suitable parallel lines through all $3$-valent points.
\item We take the geometric dual of the resulting $3|4$-configuration (remember that geometric duality transforms a point~$p$ of the projective plane into the line formed by all points orthogonal to~$p$ and conversely).
\item We take again four translated copies of this dual $3|4$-configuration and add suitable parallel lines through all $3$-valent vertices.
\end{enumerate}
\end{example}

Of course, we can try to obtain other $4$-configurations from $3|4$-configurations. This approach was used by Bokowski and Schewe~\cite{BokowskiSchewe2} to construct $(29_4)$ and $(31_4)$ configurations from the $(14_{3|4})$, $(15_{3|4})$ and $(16_{3|4})$ configurations of Figure~\ref{fig:seemOptimal}. We refer to their paper~\cite{BokowskiSchewe2} for an explanation. Here, we elaborate on the same idea to construct two new relevant $(n_4)$ configurations.

\begin{example}[First $(43_4)$ configuration]
\label{exm:43_4}
To construct a $((n+m)_4)$ configuration from an $(n_4)$ configuration and an $(m_4)$ configuration, we proceed as follows (see Figure~\ref{fig:43_4}):
\begin{enumerate}[(i)]
\item We delete two points not connected by a line in the $(n_4)$ configuration and consider the eight resulting $3$-valent lines (colored blue in Figure~\ref{fig:43_4}\,(top left) and orange in Figure~\ref{fig:43_4}\,(top right)).
\item We add four points (colored green in Figure~\ref{fig:43_4}), each incident with precisely two $3$-valent lines. All points and lines are now $4$-valent again, except the four new $2$-valent points.
\item We do the same operations in the $(m_4)$ configuration.
\item Finally, we use a projective transformation that maps the set of four $2$-valent points in the first quasi-configuration onto the set of four $2$-valent points in the second quasi-configuration. This transformation superposes the $2$-valent points to make them $4$-valent.
\end{enumerate}
If this transformation does not superpose other elements than the $2$-valent ones and does not create additional unwanted incidences, it yields the desired $((n+m)_4)$ configuration. This construction is illustrated on Figure~\ref{fig:43_4}, where we obtain a $(43_4)$ configuration from a $(25_4)$ configuration~\cite{Grunbaum1} and a $(18_4)$ configuration~\cite{BokowskiPilaud}.

\begin{figure}
  \centerline{\includegraphics[scale=.5]{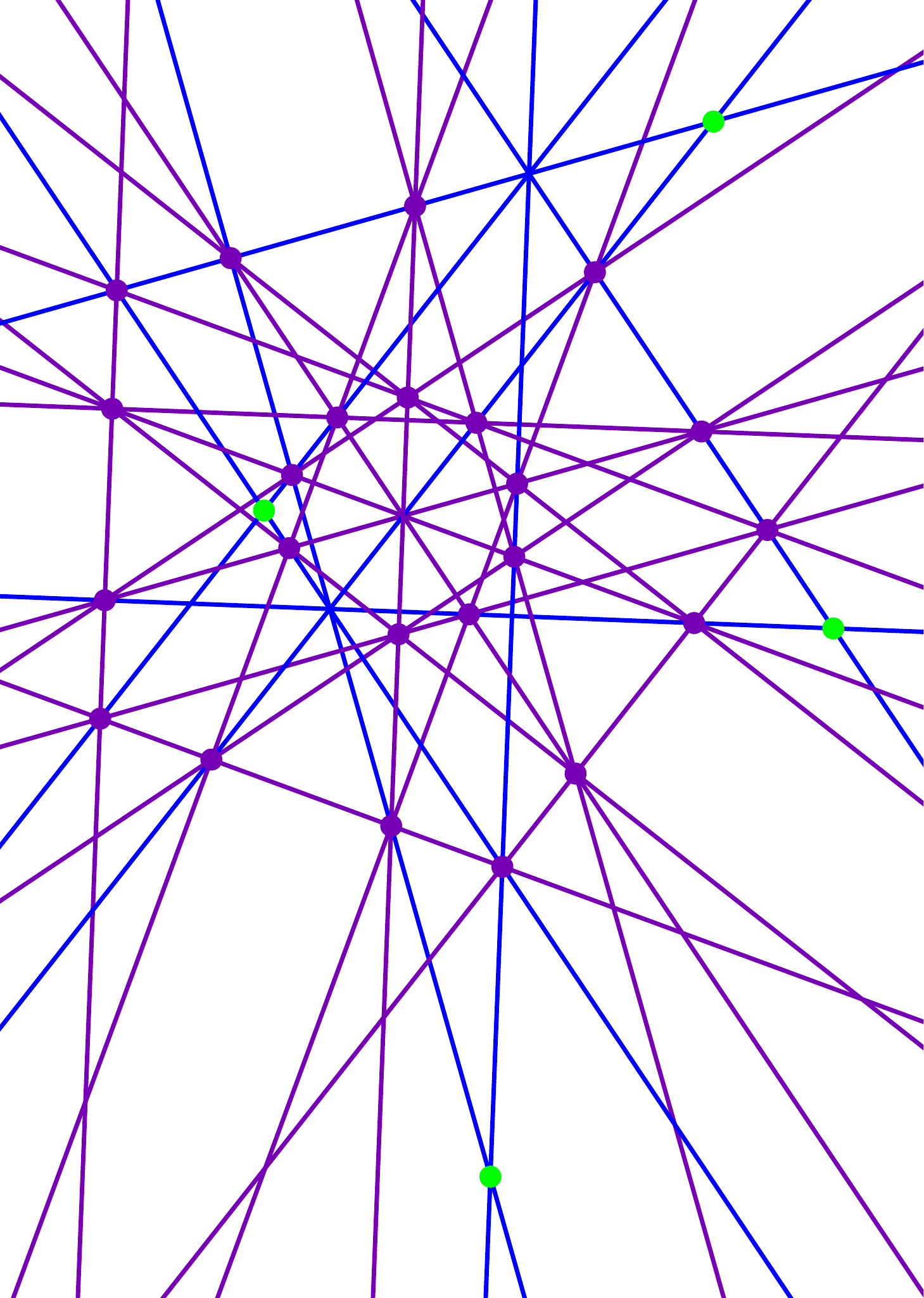}\hspace{.2cm}\includegraphics[scale=.5]{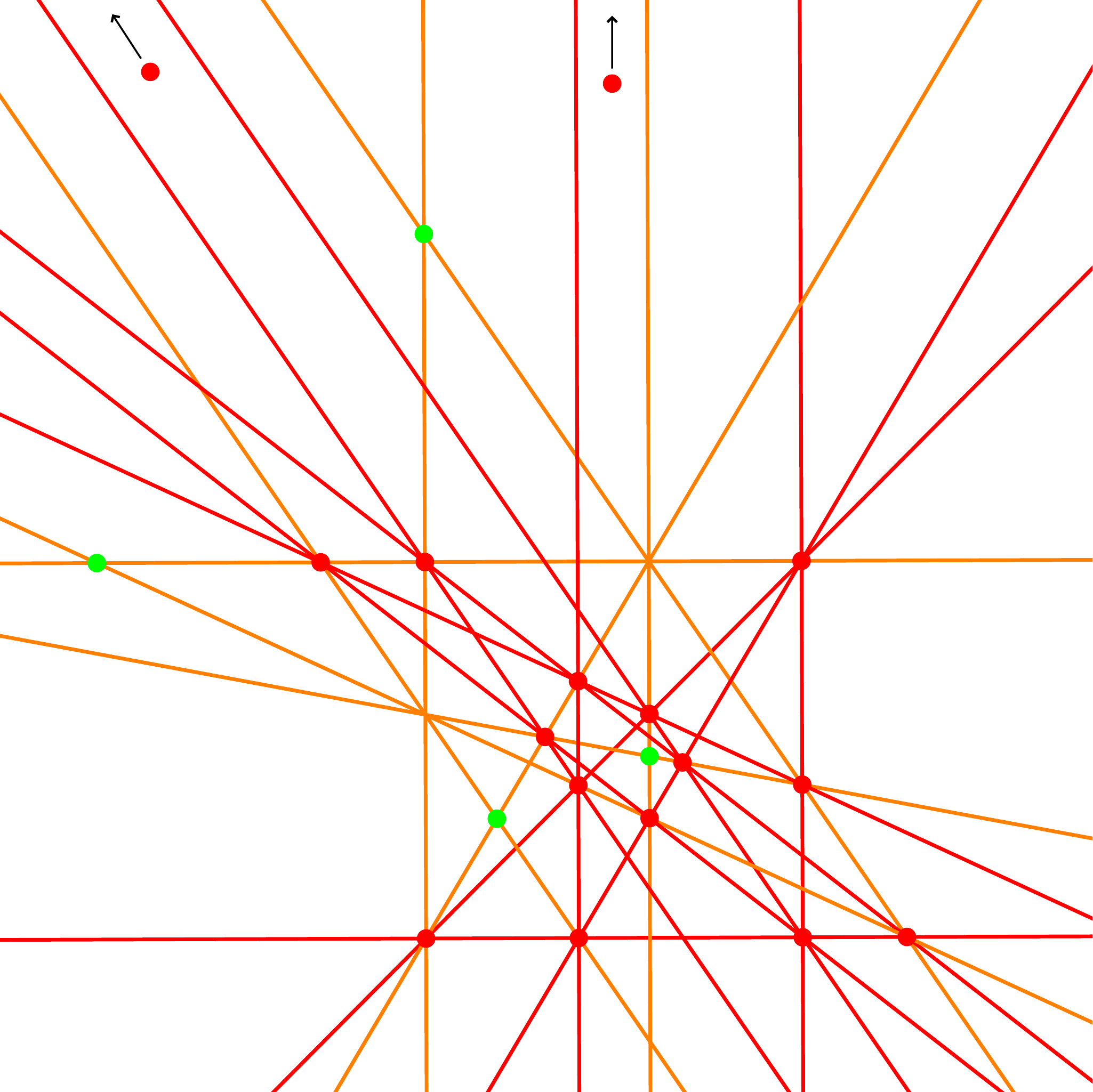}}
  \vspace{.2cm}
  \centerline{\includegraphics[scale=.5]{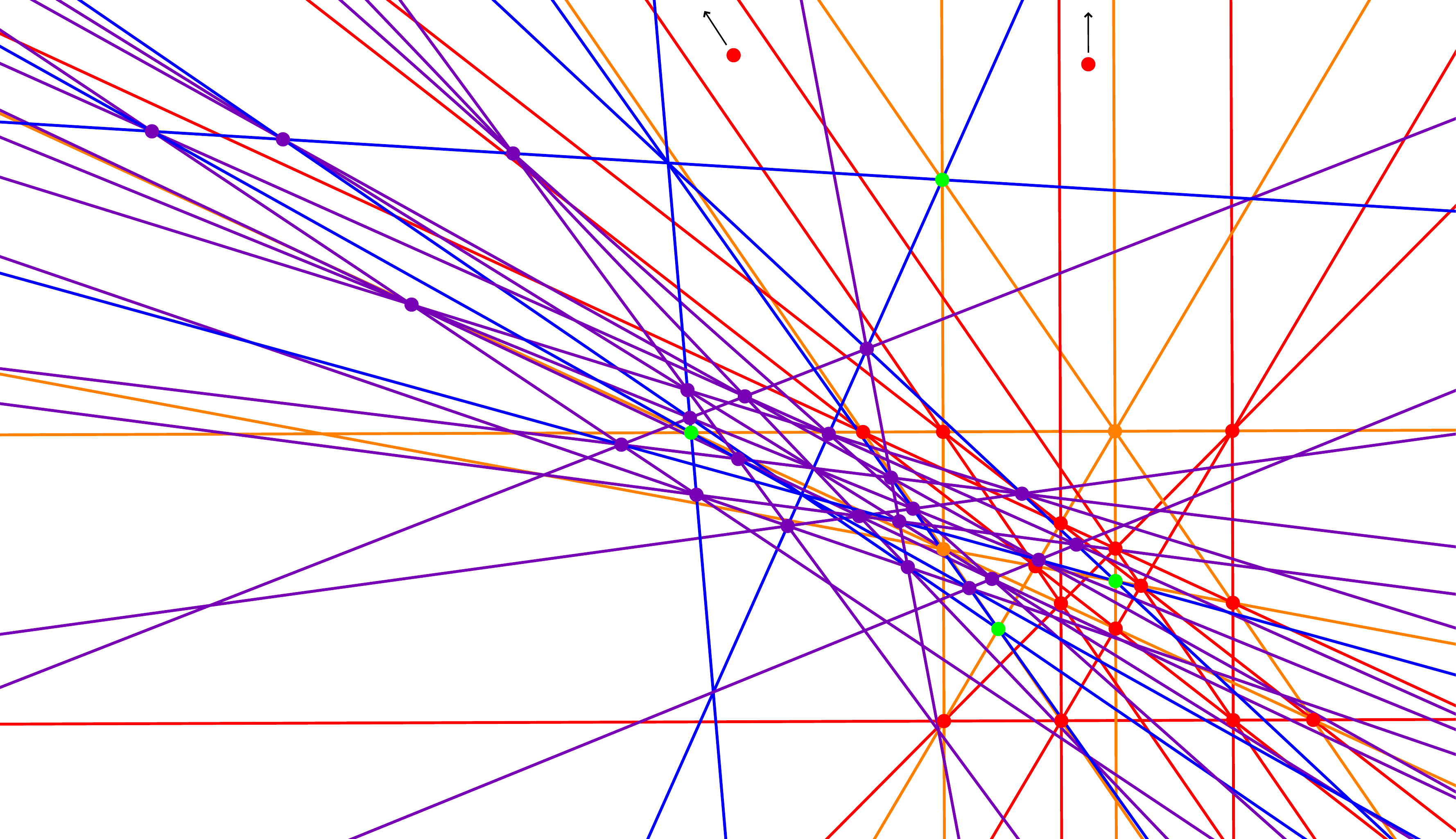}}
  \caption{A $(43_4)$ configuration built from deficient $(25_4)$ and $(18_4)$ configurations. The construction is explained in full detail in Example~\ref{exm:43_4}.}
  \label{fig:43_4}
\end{figure}
\end{example}

Unfortunately, the method from the previous example cannot provide a $(37_4)$ configuration since there is no $(n_4)$ configuration for~$n \le 17$~\cite{BokowskiSchewe2} and for~$n = 19$~\cite{BokowskiPilaud2}. We therefore need another method, which we describe in the following example.

\begin{example}[First $(37_4)$ configuration]
\label{exm:37_4}
To construct a $((n+m-1)_4)$ configuration from an $(n_4)$ configuration and an $(m_4)$ configuration, we proceed as follows (see Figure~\ref{fig:37_4}):
\begin{enumerate}[(i)]
\item We delete two points on the same line (colored green in Figure~\ref{fig:37_4}) of the $(n_4)$ configuration and consider the six resulting $3$-valent lines (colored blue in Figure~\ref{fig:37_4}\,(top left) and orange in Figure~\ref{fig:37_4}\,(top right)).
\item We add three points (colored green in Figure~\ref{fig:37_4}), each incident with precisely two $3$-valent lines. All points and lines are now $4$-valent again, except the initial $2$-valent line and the three new $2$-valent points.
\item We do the same operations in the $(m_4)$ configuration.
\item Finally, we use a projective transformation that maps the set of four $2$-valent elements in the first quasi-configuration onto the set of four $2$-valent elements in the second quasi-configuration. This transformation superposes the $2$-valent elements to make them $4$-valent.
\end{enumerate}
If this transformation does not superpose other elements than the $2$-valent ones and does not create additional unwanted incidences, it yields the desired $((n+m-1)_4)$ configuration. This construction is illustrated on Figure~\ref{fig:37_4}, where we obtain a $(37_4)$ configuration from a $(20_4)$ configuration~\cite{Grunbaum1} and a $(18_4)$ configuration~\cite{BokowskiPilaud}.

\begin{figure}
  \centerline{\includegraphics[scale=.42]{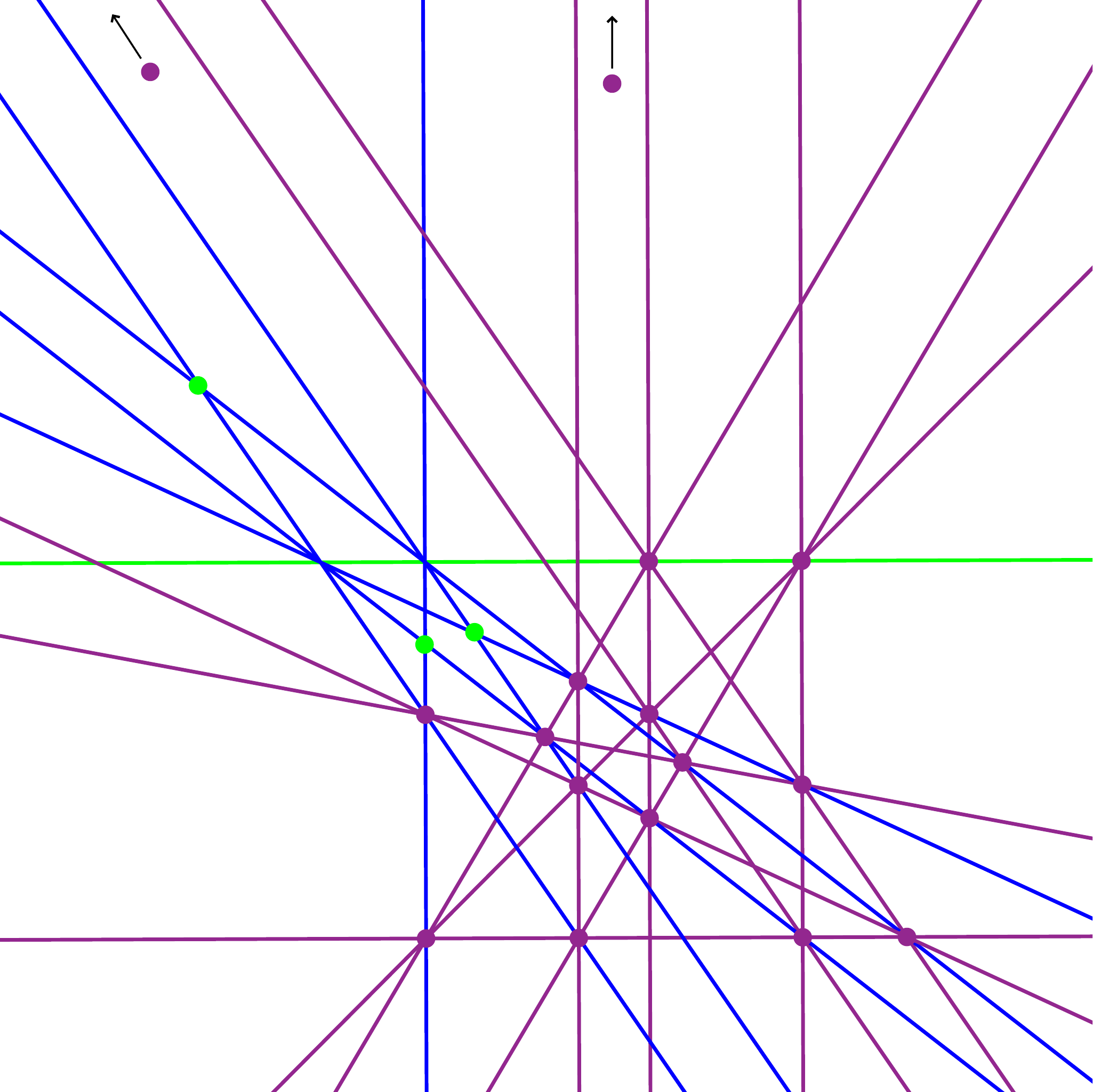}\hspace{.2cm}\includegraphics[scale=.365]{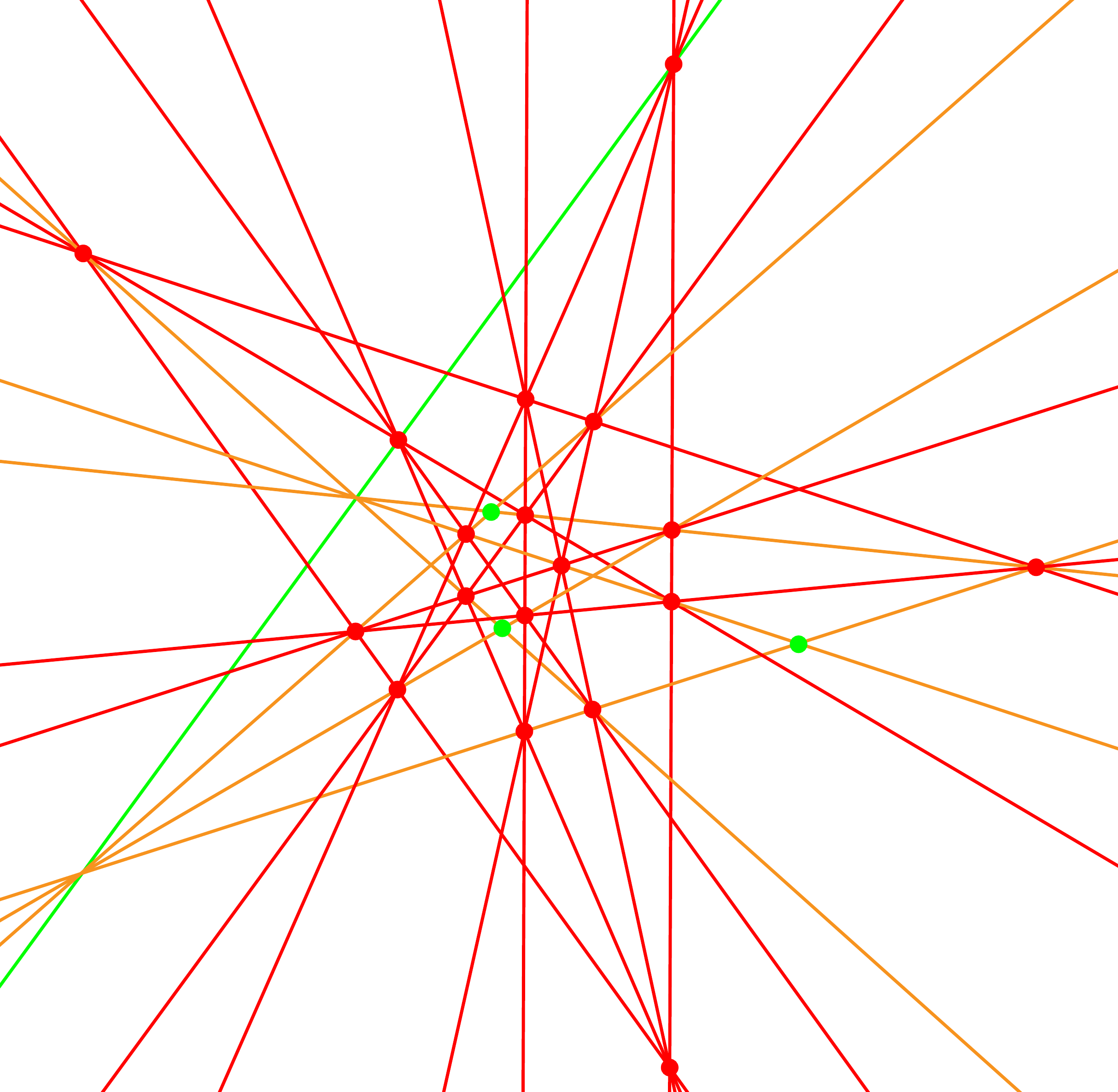}}
  \vspace{.2cm}
  \centerline{\includegraphics[scale=.43]{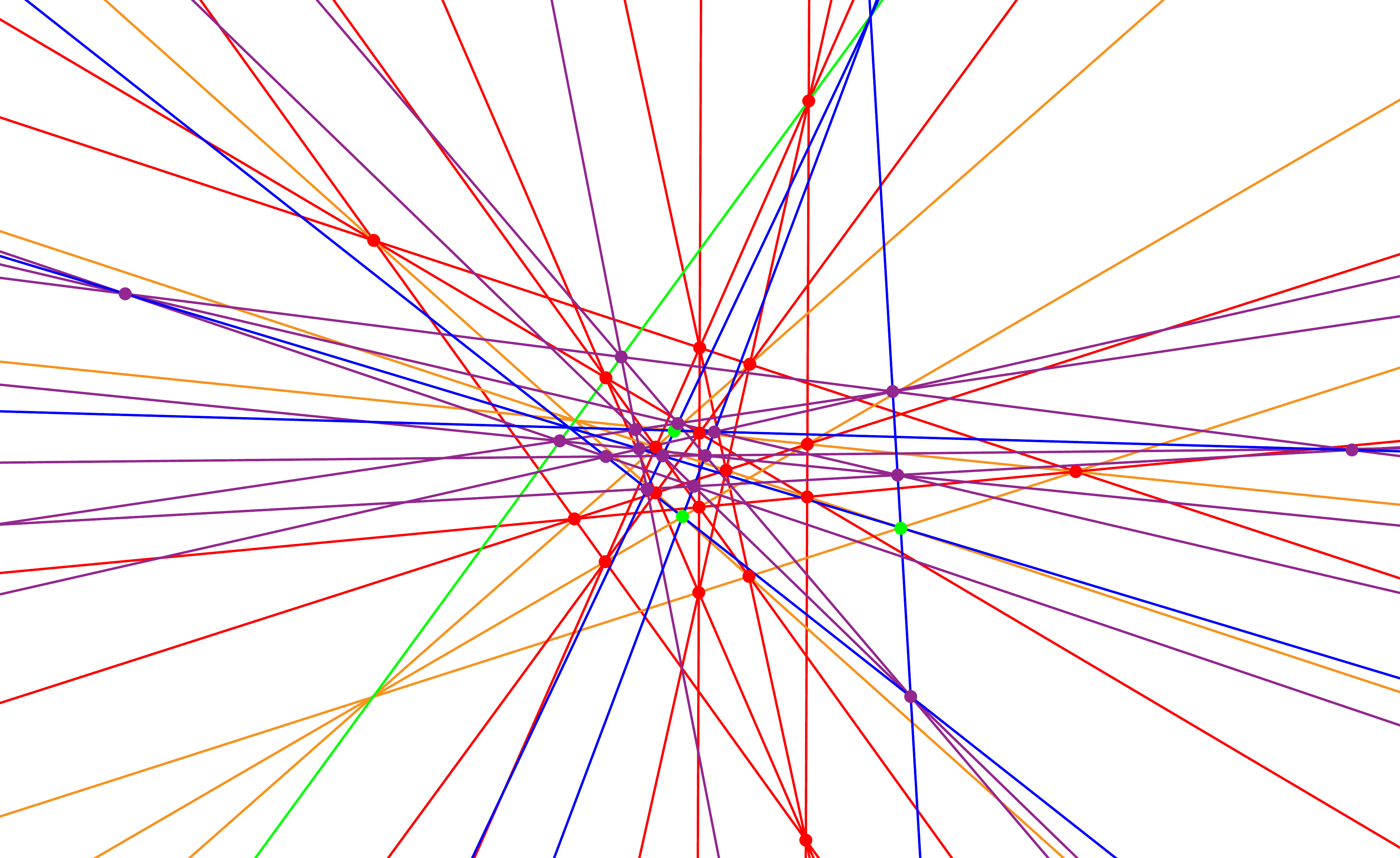}}
  \caption{A $(37_4)$ configuration built from deficient $(20_4)$ and $(18_4)$ configurations. The construction is explained in full detail in Example~\ref{exm:37_4}.}
  \label{fig:37_4}
\end{figure}
\end{example}

We invite the reader to try his own constructions, similar to the constructions of Examples~\ref{exm:43_4} and~\ref{exm:37_4}, using the operations on point\,--\,line incidence structures described above. In this way, one can obtain many $(n_4)$ configurations for various values of~$n$. Additional features can even be imposed, such as non-trivial motions or symmetries. We have however not been able to find answers to the following question.

\begin{question}
Can we create a $(22_4)$ configuration by glueing two quasi-configurations with $11$ points and lines each? More generally, can we construct $(22_4)$, $(23_4)$, or $(26_4)$ configurations by superposition of smaller quasi-configurations?
\end{question}

%%%%%%%%%%%%%%%%%%%%%%%%%%%%%%
%%%%%%%%%%%%%%%%%%%%%%%%%%%%%%

\section{Obstructions and optimal $3|4$-configurations}
\label{sec:obstructions}

In this section, we further investigate point\,--\,line incidence structures and $3|4$-configurations. We start with a necessary condition for the existence of topological incidence structures with a given signature. For this, we extend to all topological incidence structures an argument of Bokowski and Schewe~\cite{BokowskiSchewe1} that was used to prove the non-existence of~$(15_4)$ configurations. We obtain the following inequality.

\begin{proposition}
\label{prop:obstruction}
Let~$(P,L)$ be topological incidence structure with signature~$(\bP,\bL)$. Then
\[
\bP''(1) + 2\bP'(1) - \bL(1)^2 + \bL(1) - 6\bP(1) + 6 \le 0.
\]
\end{proposition}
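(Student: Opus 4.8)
The plan is to run an Euler-type counting argument on the cell complex that the topological incidence structure induces on the projective plane~$\bbP$. First I would form the natural CW-structure: the vertices are the $\bP(1)$ points of~$P$ together with the crossing points of the pseudolines that are not in~$P$; the edges are the arcs into which the pseudolines are cut by all these vertices; and the faces are the connected components of the complement. Since the pseudolines pairwise cross exactly once, any two of the $\bL(1)$ pseudolines contribute exactly one crossing, so the total number of pseudoline-pseudoline crossings is~$\binom{\bL(1)}{2}$, of which the ones lying at points of~$P$ account for~$\bP''(1)/2 + \bP'(1)$ — wait, more precisely a point of~$P$ lying on~$i$ pseudolines absorbs~$\binom{i}{2}$ crossings, so the points of~$P$ absorb~$\sum_i p_i \binom{i}{2} = \tfrac12(\bP''(1)+\bP'(1)) - \tfrac12\bP'(1) = \tfrac12\bP''(1)$ crossings. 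Hence the number~$c$ of extra crossing vertices is~$c = \binom{\bL(1)}{2} - \tfrac12\bP''(1)$, and the vertex count is~$V = \bP(1) + c$.

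Next I would count edges. Each pseudoline is a closed curve subdivided by the vertices lying on it; a pseudoline through a point of valence... rather, the number of vertices on a given pseudoline~$\ell$ equals (number of points of~$P$ on~$\ell$) plus (number of extra crossings on~$\ell$), and summing the valences: the total number of (vertex, incident pseudoline) pairs from points of~$P$ is~$\bP'(1) = \bL'(1)$, and from extra crossings is~$2c$. Since each pseudoline is topologically a circle, its number of edges equals its number of vertices, so~$E = \bL'(1) + 2c$. Now I apply Euler's formula for~$\bbP$, namely~$V - E + F = 1$, to get~$F = 1 - V + E = 1 - \bP(1) - c + \bL'(1) + 2c = 1 - \bP(1) + \bL'(1) + c$. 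The key geometric input, exactly as in Bokowski--Schewe~\cite{BokowskiSchewe1}, is that every face has at least~$3$ sides (pseudolines are non-separating simple closed curves crossing pairwise once, so there are no $1$- or $2$-gons), while~$2E \ge 3F$ counts edge-face incidences with each edge on at most two faces. Substituting the expressions for~$E$ and~$F$ and then replacing~$c = \binom{\bL(1)}{2} - \tfrac12\bP''(1)$ and~$\bL'(1) = \bP'(1)$ should, after collecting terms and multiplying through, yield precisely
\[
\bP''(1) + 2\bP'(1) - \bL(1)^2 + \bL(1) - 6\bP(1) + 6 \le 0.
\]

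I would double-check the bookkeeping by testing the inequality on a genuine $(n_4)$ configuration, where~$\bP = \bL = n\,y^4$ gives~$\bP''(1) = 12n$, $\bP'(1) = 4n$, $\bL(1) = n$, and the left side becomes~$12n + 8n - n^2 + n - 6n + 6 = -n^2 + 15n + 6 \le 0$, i.e.~$n \ge 16$ (the bound one expects), which is reassuring. The main obstacle is not the algebra but making the topological setup fully rigorous: one must argue carefully that the pseudoline arrangement together with the extra crossings really does give a regular CW-decomposition of~$\bbP$ (in particular that no face is a disk with fewer than three boundary edges, and that two distinct pseudolines genuinely meet in only one point so the crossing count is exactly~$\binom{\bL(1)}{2}$), and that deleting the points of~$P$ not used as pseudoline intersections does not change the count — this is where the hypothesis that we have a bona fide topological incidence structure, rather than an arbitrary arrangement, is essential. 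A secondary subtlety is handling the degenerate case~$\bL(1) \le 1$ (a single pseudoline), which must be excluded or checked separately since then there are no crossings and Euler's formula is applied to a curve rather than a genuine $2$-complex; connectedness of~$(P,L)$, assumed throughout the paper, rules this out for any nontrivial structure.
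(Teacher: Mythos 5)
Your argument is essentially the paper's proof: the same Euler-type count of vertices (points of~$P$ plus the $\binom{\bL(1)}{2}-\tfrac12\bP''(1)$ leftover crossings), edges and faces, combined with the no-digon inequality $2E\ge 3F$; the only cosmetic difference is that you work directly in~$\bbP$ with $V-E+F=1$ while the paper lifts to the double cover on the $2$-sphere and uses $f_0-f_1+f_2=2$, which merely doubles every count. Your bookkeeping is correct and the algebra does close up to give the stated inequality, and the caveats you flag (faces being disks, no digons, degenerate small arrangements) are exactly the points the paper also leaves implicit.
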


\begin{proof}
Let~$p_i$ denote the number of $i$-valent points and~$\ell_j$ the number of $j$-valent lines in the incidence structure~$(P,L)$. The signature~$(\bP,\bL)$ is given by~$\bP(x) \eqdef \sum_i p_i x^i$ and~$\bL(y) \eqdef \sum_j \ell_j y^j$.

Since the incidence structure is topological, we can draw it on the projective plane such that no three pseudolines pass through a point which is not in~$P$. We call \defn{additional $2$-crossings} the intersection points of two lines of~$L$ which are not points of~$P$. We consider the lifting of this drawing on the $2$-sphere. We obtain a graph embedded on the sphere, whose vertices are all points of~$P$ together with all additional $2$-crossings, whose edges are the segments of lines of~$L$ located between two vertices, and whose faces are the connected components of the complement of~$L$. Let~$f_0$, $f_1$ and~$f_2$ denote respectively the number of vertices, edges and faces of this map. Denoting by~$\deg(p)$ the number of lines of~$L$ containing a point~$p \in P$ and similarly by~$\deg(\ell)$ the numbers of points of~$P$ contained in a line~$\ell \in L$, we have
\begin{align*}
f_0 & = 2{\bL(1) \choose 2} - 2\sum_{p \in P} \left( \!\! {\deg(p) \choose 2} - 1 \right) = \bL(1) \big( \bL(1)-1 \big) + 2\bP(1) - \sum_i i(i-1)p_i, \\
f_1 & = 2\sum_{\ell \in L} \deg(\ell) + 2f_0 - 2\bP(1) = 2\sum_j j\ell_j + 2f_0 - 2\bP(1) = 2\sum_i ip_i + 2f_0 - 2\bP(1), \\
f_2 & = f_1 - f_0 + 2.
\end{align*}
Moreover, since no face is a digon, we have~$3f_2 \le 2f_1$. Replacing~$f_2$ and~$f_1$ by the above expressions, we obtain
\[
0 \ge 3f_2 - 2f_1 = f_1 - 3f_0 + 6 = 2\sum_i ip_i - 4\bP(1) - f_0 + 6 = \sum_i i(i+1)p_i - \bL(1) \big( \bL(1)-1 \big) - 6 \big( \bP(1)-1 \big),
\]
and thus the desired inequality.
\end{proof}

\begin{corollary}
\label{coro:ab}
If there is a topological incidence structure with signature~$(ax^3+bx^4, ay^3+by^4)$,~then
\[
-(a+b)^2+7a+15b+6 \le 0.
\]
The following table provides the minimum value of~$b$ for which there could exist a topological incidence structure with signature~$(ax^3+bx^4, ay^3+by^4)$:

\medskip
\centerline{
	\begin{tabular}{c|cccccccccc}
	$a$        & $0$  & $1$  & $2$  & $3$  & $4$ & $5$ & $6$ & $7$ \\
	\hline
	$b_{\min}$ & $16$ & $14$ & $13$ & $11$ & $9$ & $8$ & $6$ & $3$
	\end{tabular}
}
\end{corollary}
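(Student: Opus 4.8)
The plan is to specialize Proposition~\ref{prop:obstruction} to the signature~$(ax^3+bx^4, ay^3+by^4)$ and then to extract the table entries by elementary arithmetic. For the first part, I would compute the relevant polynomial values: with $\bP(x) = ax^3 + bx^4$ we have $\bP(1) = a+b$, $\bP'(1) = 3a+4b$, and $\bP''(1) = 6a + 12b$; likewise $\bL(1) = a+b$. Substituting these into the inequality of Proposition~\ref{prop:obstruction} gives
\[
(6a+12b) + 2(3a+4b) - (a+b)^2 + (a+b) - 6(a+b) + 6 \le 0,
\]
and collecting terms yields $-(a+b)^2 + (6a+6a+a-6a) + (12b+8b+b-6b) + 6 \le 0$, that is $-(a+b)^2 + 7a + 15b + 6 \le 0$, which is the claimed inequality. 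This is just a routine substitution; I expect no difficulty here beyond bookkeeping.

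For the table, I would fix each value of $a \in \{0,1,\dots,7\}$ and solve the inequality $-(a+b)^2 + 7a + 15b + 6 \le 0$ for the smallest nonnegative integer $b$. Writing $n = a+b$, the inequality reads $n^2 \ge 7a + 15b + 6 = 7a + 15(n-a) + 6 = 15n - 8a + 6$, i.e. $n^2 - 15n + 8a - 6 \ge 0$. For fixed $a$ this is a quadratic in $n$ that fails exactly on an interval of values of $n$ around $n = 15/2$, so one must take $n$ at least as large as the larger root $\tfrac{15 + \sqrt{225 - 4(8a-6)}}{2} = \tfrac{15 + \sqrt{249 - 32a}}{2}$, and then $b_{\min} = \lceil \text{that} \rceil - a$. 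Plugging in $a = 0$ gives $\sqrt{249} \approx 15.78$, so $n \ge 16$ and $b_{\min} = 16$; $a = 1$ gives $\sqrt{217} \approx 14.73$, so $n \ge 15$ and $b_{\min} = 14$; continuing through $a = 7$ gives $\sqrt{25} = 5$, so $n \ge 10$ and $b_{\min} = 3$. Each entry is a one-line check, so I would likely just state that the table entries follow by direct verification rather than display all eight computations.

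There is no real obstacle in this corollary: it is a pure specialization plus a finite arithmetic check. The only point requiring a word of care is that Proposition~\ref{prop:obstruction} gives a \emph{necessary} condition, so the table records the smallest $b$ for which the inequality \emph{permits} a topological incidence structure, not a value for which one is guaranteed to exist — the wording ``could exist'' in the statement already reflects this, and I would make sure the proof text echoes it so the reader does not over-read the conclusion. I would also remark in passing that one should double-check that the borderline cases (where $n^2 - 15n + 8a - 6$ vanishes or is barely positive) are handled with the correct rounding, since an off-by-one there would corrupt a table entry; but this is routine.
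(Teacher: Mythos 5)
Your proof is correct and is exactly what the paper does: the paper's entire proof reads ``Direct application of Proposition~\ref{prop:obstruction} with $\bP(x)=ax^3+bx^4$ and $\bL(y)=ay^3+by^4$,'' and your substitution $\bP(1)=a+b$, $\bP'(1)=3a+4b$, $\bP''(1)=6a+12b$, $\bL(1)=a+b$ together with the root computation for the table is the intended (and verified) arithmetic. The only cosmetic remark is that when you pass to ``$n$ at least the larger root'' you are implicitly using that $n=a+b\ge a$ already exceeds the smaller root, which holds for all $a\le 7$ here.
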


\begin{proof}
Direct application of Proposition~\ref{prop:obstruction} with~$\bP(x) = ax^3+bx^4$ and~$\bL(y) = ay^3+by^4$. 
\end{proof}

For example, there is no topological $(15_4)$ configuration~\cite{BokowskiSchewe1} and no incidence structure with signature~$(7x^3+2x^4, 7y^3+2y^4)$. Compare to Example~\ref{exm:8282} which shows that a configuration with signature~$(8x^3+2x^4, 8y^3+2y^4)$ exists.

\begin{corollary}
\label{coro:optimal}
A $(n_{3|4})$ configuration has at most~$I_{\max} \eqdef \min\left(4n \; , \; \left\lfloor\dfrac{n^2+17n-6}{8}\right\rfloor\right)$ incidences.

\vspace{-.15cm}
\noindent
The values of~$I_{\max}$ appear in the following table:

\medskip
\centerline{
	\begin{tabular}{c|cccccccccc}
		$n$ & $7$ & $8$ & $9$ & $10$ & $11$ & $12$ & $13$ & $14$ & $15$ & $16$ \\
		\hline
		$I_{\max}$ & $20$ & $24$ & $28$ & $33$ & $37$ & $42$ & $48$ & $53$ & $59$ & $64$
	\end{tabular}
}
\end{corollary}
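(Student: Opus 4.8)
The plan is to produce the bound as the minimum of two separate upper bounds on the number~$I$ of incidences in an $(n_{3|4})$ configuration, one trivial and one coming from Proposition~\ref{prop:obstruction}. First, since every line contains at most~$4$ points and there are~$n$ lines, we immediately get~$I \le 4n$. Second, write the signature as~$(ax^3+bx^4, ay^3+by^4)$ with~$a+b=n$, so that~$I = 3a+4b = 3n+b = \bP'(1)$. I would then feed~$\bP(x)=ax^3+bx^4$ and~$\bL(y)=ay^3+by^4$ into the inequality of Proposition~\ref{prop:obstruction} exactly as in Corollary~\ref{coro:ab}, obtaining~$-(a+b)^2+7a+15b+6\le 0$, i.e.~$-n^2+7a+15b+6\le 0$. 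Using~$a=n-b$ this becomes~$8b \le n^2-7n-6$, hence~$b \le (n^2-7n-6)/8$, and therefore~$I = 3n+b \le 3n + (n^2-7n-6)/8 = (n^2+17n-6)/8$. Since~$I$ and~$b$ are integers, we may take the floor, giving~$I \le \lfloor (n^2+17n-6)/8 \rfloor$. Combining the two bounds yields~$I \le I_{\max}$.

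The two regimes correspond to which bound is active: for small~$n$ the quadratic bound is the binding one, while for larger~$n$ the linear bound~$4n$ wins (the crossover is where~$n^2+17n-6 \approx 32n$, i.e. around~$n=16$, which matches the table, where~$I_{\max}=64=4\cdot 16$). To fill in the displayed table one simply evaluates~$\min(4n, \lfloor (n^2+17n-6)/8\rfloor)$ at~$n=7,\dots,16$; for instance at~$n=10$ we get~$\min(40, \lfloor 264/8 \rfloor)=\min(40,33)=33$, and at~$n=11$, $\min(44, \lfloor 432/8 \rfloor) = \min(44,54)=44$. Wait — this needs care: at $n=11$ one computes $(121+187-6)/8 = 302/8 = 37.75$, so $\lfloor 37.75\rfloor = 37 < 44$, hence $I_{\max}=37$, consistent with the table. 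I would double-check each entry this way rather than trusting a mental crossover estimate.

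The only real subtlety — and the step I would be most careful about — is the passage to the floor and the direction of all inequalities: one must verify that~$I=3n+b$ is a genuine integer constraint so that~$b \le (n^2-7n-6)/8$ can be tightened to~$I \le \lfloor(n^2+17n-6)/8\rfloor$, and that Proposition~\ref{prop:obstruction} is being applied with the correct signature (in particular that an $(n_{3|4})$ configuration indeed has~$\bL$ equal to~$\bP$ with~$x$ replaced by~$y$, which is noted in the text: $a=c$ and~$b=d$). Everything else is routine substitution and arithmetic, so I would keep the written proof to a single sentence invoking Proposition~\ref{prop:obstruction} together with the trivial bound~$I\le 4n$, exactly in the terse style the authors use for Corollary~\ref{coro:ab}.
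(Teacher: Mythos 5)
Your proposal is correct and follows essentially the same route as the paper: the trivial bound $I \le 4n$ combined with Corollary~\ref{coro:ab} applied to the signature $(ax^3+bx^4, ay^3+by^4)$, yielding $8I \le n^2+17n-6$ (the paper rewrites $-(a+b)^2+7a+15b+6$ directly as $-n^2+8I-17n+6$ rather than solving for $b$ first, but this is only a cosmetic difference in the algebra). Just make sure the final write-up drops the transient arithmetic slip at $n=11$, which you already caught and corrected.
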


\begin{proof}
Consider an $(n_{3|4})$ configuration with signature~$(ax^3+bx^4, ay^3+by^4)$ where~$a+b=n$. The number of incidences is~$I \eqdef 3a+4b$. It can clearly not exceed~$4n$. For the second term in the minimum, we apply Corollary~\ref{coro:ab} to get
\[
0 \ge -(a+b)^2 + 7a + 15b + 6 = - (a+b)^2 + 8(3a+4b) - 17(a+b) + 6 = -n^2+8I-17n+6. \qedhere
\]
\end{proof}

\begin{corollary}
There is no topological $(n_{3|4})$ configuration if~$n \le 8$.
\end{corollary}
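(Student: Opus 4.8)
The plan is to combine the exact incidence count of an $(n_{3|4})$ configuration with the upper bound of Corollary~\ref{coro:optimal}: this settles every case in one stroke except $n=8$, which then needs a separate (classical) input.

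First I would record that, by definition, an $(n_{3|4})$ configuration has signature $(ax^3+bx^4,ay^3+by^4)$ with $a,b\in\N$ and $a+b=n$, hence exactly $I=3a+4b=3n+b$ point\,--\,line incidences; in particular $I\ge 3n$. On the other hand, Corollary~\ref{coro:optimal} gives $I\le I_{\max}\le\lfloor(n^2+17n-6)/8\rfloor$. The two bounds together force $3n\le(n^2+17n-6)/8$, that is $n^2-7n-6\ge 0$, an inequality that fails for every $n\le 7$ (at $n=7$ it reads $-6\ge 0$). So there is no topological $(n_{3|4})$ configuration for $n\le 7$; note that this part is purely combinatorial, does not use the topological hypothesis at all, and in particular also disposes of the trivially impossible small values of~$n$ for which no connected incidence structure with all valences at least~$3$ exists.

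It remains to handle $n=8$. Here $I_{\max}=\min(32,\lfloor 194/8\rfloor)=24=3n$, so the inequalities $3n\le I\le I_{\max}$ are both tight and force $b=0$ (equivalently $I=24$): a hypothetical topological $(8_{3|4})$ configuration would have signature $(8x^3,8y^3)$, i.e. it would be a topological $(8_3)$ configuration, in which every point lies on exactly three lines and every line carries exactly three points. I would then rule this out by recalling that the unique combinatorial $(8_3)$ configuration is the M\"obius\,--\,Kantor configuration, and that it has no topological realization --- there is no arrangement of $8$ pseudolines realizing its eight triple points, so that topological $(n_3)$ configurations exist only for $n\ge 9$ (for $n=7$ this is already Corollary~\ref{coro:ab} with $a=7$, $b=0$; for $n=8$ see~\cite{Grunbaum1}). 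This eliminates the last remaining case and finishes the proof.

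The main obstacle is precisely the case $n=8$: the counting obstruction of Proposition~\ref{prop:obstruction} is tight there --- its inequality is satisfied \emph{with equality} by the parameters of an $(8_3)$ configuration --- so it cannot by itself exclude such a configuration, and one genuinely needs the external fact that the M\"obius\,--\,Kantor configuration admits no pseudoline realization. Everything else in the argument is elementary arithmetic with the bound of Corollary~\ref{coro:optimal}.
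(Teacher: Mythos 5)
Your argument is correct and follows the paper's proof exactly: the bound of Corollary~\ref{coro:optimal} rules out $n\le 7$ since $3n>I_{\max}$ there, forces an $(8_{3|4})$ configuration to be an $(8_3)$ configuration, and the non-realizability of the unique combinatorial $(8_3)$ configuration (M\"obius--Kantor) finishes the case $n=8$. One small aside is off, though it does not affect the proof: the $n\le 7$ step \emph{does} use the topological hypothesis, since Corollary~\ref{coro:optimal} rests on the Euler-formula argument of Proposition~\ref{prop:obstruction}, which requires a pseudoline drawing.
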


\begin{proof}
If~$n \le 7$, there is no topological $(n_{3|4})$ configuration since it should have at least $3n$ incidences, which is larger than the upper bound of Corollary~\ref{coro:optimal}. If~$n = 8$, a $(8_{3|4})$ configuration should be a $(8_3)$ configuration by Corollary~\ref{coro:optimal}. But the only combinatorial $(8_3)$ configuration is not topological.
\end{proof}

To close this section, we exhibit optimal $(n_{3|4})$ configurations for small values of~$n$, \ie, $(n_{3|4})$ configurations which maximize the number of point\,--\,line incidences.

\begin{proposition}
For~$9 \le n \le 13$, the bound of Corollary~\ref{coro:optimal} is tight: there exists $(n_{3|4})$ configurations with $\left\lfloor\frac{n^2+17n-6}{8}\right\rfloor$ incidences.
\end{proposition}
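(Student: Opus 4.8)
The plan is to exhibit, for each value $n \in \{9,10,11,12,13\}$, an explicit geometric $(n_{3|4})$ configuration whose signature $(ax^3+bx^4, ay^3+by^4)$ realizes the extremal number of incidences $I_{\max} = \lfloor (n^2+17n-6)/8 \rfloor$ predicted by Corollary~\ref{coro:optimal}. Since for an $(n_{3|4})$ configuration we have $a+b = n$ and $I = 3a+4b = 3n+b$, fixing the number of incidences is the same as fixing $b$ (the number of $4$-valent elements of each type); so the first step is purely arithmetic: for each $n$ in the range, solve $3n + b = \lfloor(n^2+17n-6)/8\rfloor$ for the target $b$, and check that the resulting pair $(a,b) = (n-b,\, b)$ lies in the feasible range of Corollary~\ref{coro:ab} (it does, by the definition of $I_{\max}$, but it is worth displaying the target signatures: e.g.\ $n=9$ gives $(9_3)$-type, actually $b=1$, signature $(8x^3+x^4,\dots)$; $n=10$ gives the signature of Figure~\ref{fig:optimal}; and so on).

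Next, for each target signature I would produce a picture. The natural source of these examples is the operations catalogued in Section~\ref{sec:constructions}: start from a small known configuration (a $(9_3)$ configuration such as Pappus, the $(10_3)$ Desargues configuration, or one of the $(n_4)$ configurations with $n$ near the range) and apply a controlled number of deletions, additions, or superpositions to reach exactly the desired incidence count. For instance, deleting a single point from a suitable $(n_3)$ configuration and re-adding it on a different set of three lines, or superposing two $2$-valent vertices coming from two copies of a smaller building block, changes $b$ by a predictable amount; iterating gives the full list. Concretely I expect: $n=9$ realized by a $(9_3)$ configuration plus one extra line through three of its points (and dually), $n=10,11$ realized as small perturbations of Desargues or of the $(10_{3|4})$ configuration of Example~\ref{exm:8282}, and $n=12,13$ realized either by superposition of two of the $n=6,7$-type blocks implicit in Corollary~\ref{coro:ab} or by deleting a few elements from a $(16_4)$-near configuration. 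Each candidate should be drawn with coordinates (or at least with a convincing projective picture, possibly using points at infinity as in Figure~\ref{fig:splittings18_4}) so that geometric realizability is evident.

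The heart of the argument — and the main obstacle — is verifying geometric realizability: a combinatorial incidence structure with the right signature is easy to write down, but one must exhibit an honest arrangement of straight lines in $\bbP$ (not merely pseudolines) realizing it without any unintended extra incidences. This is exactly the subtlety that makes the $(n_4)$ existence problem hard, so I would handle it by keeping every example small and as symmetric as possible: impose a cyclic or dihedral symmetry so that the realization reduces to solving a one- or two-parameter family of polynomial equations, then either solve in closed form or certify a rational solution by direct substitution. A secondary point to check, once coordinates are in hand, is that the configuration is connected in the sense required by the paper and that each element has valence exactly $3$ or $4$ (no accidental alignments). Assembling the five pictures, together with a one-line incidence check for each, completes the proof; I would present the constructions in a single figure (the analogue of Figure~\ref{fig:optimal}) with the $3$-valent elements in red and $4$-valent ones in blue, following the conventions already established in Example~\ref{exm:8282}.
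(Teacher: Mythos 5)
Your overall strategy --- reduce the statement to finding, for each $n$, a geometric $(n_{3|4})$ configuration with the signature $\bigl((n-b)x^3+bx^4,\ (n-b)y^3+by^4\bigr)$ where $b = I_{\max}-3n$, and then exhibit explicit coordinates --- is exactly the shape of the paper's argument, and your identification of the real difficulty (honest straight-line realizability with no unintended incidences) is correct. The problem is that your text stops at the plan: every construction is deferred (``I would produce a picture'', ``I expect'', ``I would handle it by\dots''). For an existence statement whose entire proof \emph{is} the exhibition of examples, this leaves the whole content missing. The paper's proof consists of one explicit coordinatized example for $n=13$ (the self-dual set of $13$ points/lines given by the $3\times3$ affine grid $\{-1,0,1\}^2$ together with four points at infinity, yielding $48$ incidences), from which the cases $n=12,11,10$ are obtained by deleting suitable points and lines, plus a separate picture for $n=9$. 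You mention deletion as one available operation but do not commit to this ``build the top case and delete down'' scheme, nor do you supply any coordinates, so realizability is never actually certified for any $n$.

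Some of the concrete guesses you do hazard would also not work as stated. For $n=9$ the target signature is $(8x^3+x^4,\,8y^3+y^4)$ (one $4$-valent point and one $4$-valent line); adding ``one extra line through three points'' of a $(9_3)$ configuration produces $9$ points but $10$ lines and the wrong valence distribution, so it is not even an $(n_{3|4})$ configuration in the paper's sense ($|P|$ must equal $|L|$). Similarly, perturbing the $(10_{3|4})$ configuration of Example~\ref{exm:8282} (signature $(8x^3+2x^4,\dots)$, $32$ incidences) does not obviously reach the required $(7x^3+3x^4,\dots)$ with $33$ incidences --- the paper instead gets $n=10$ by two dual deletions from the $13$-element example. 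To close the gap you must actually write down the five configurations (coordinates or an equivalent certificate) and verify their signatures.
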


\begin{proof}
For~$n=13$, we consider the $(13_{3|4})$-configuration of Figure~\ref{fig:optimal}. The homogeneous coordinates of its points and lines are given by
\[
P \eqdef L \eqdef \set{\begin{bmatrix} i \\ j \\ 1 \end{bmatrix}}{i,j \in \{-1,0,1\}} \cup \left\{\begin{bmatrix} 1 \\ 0 \\ 0 \end{bmatrix}, \begin{bmatrix} 0 \\ 1 \\ 0 \end{bmatrix}, \begin{bmatrix} 1 \\ 1 \\ 0 \end{bmatrix}, \begin{bmatrix} 1 \\ -1 \\ 0 \end{bmatrix}\right\}.
\]

\begin{figure}
  \centerline{
  \includegraphics[width=.35\textwidth]{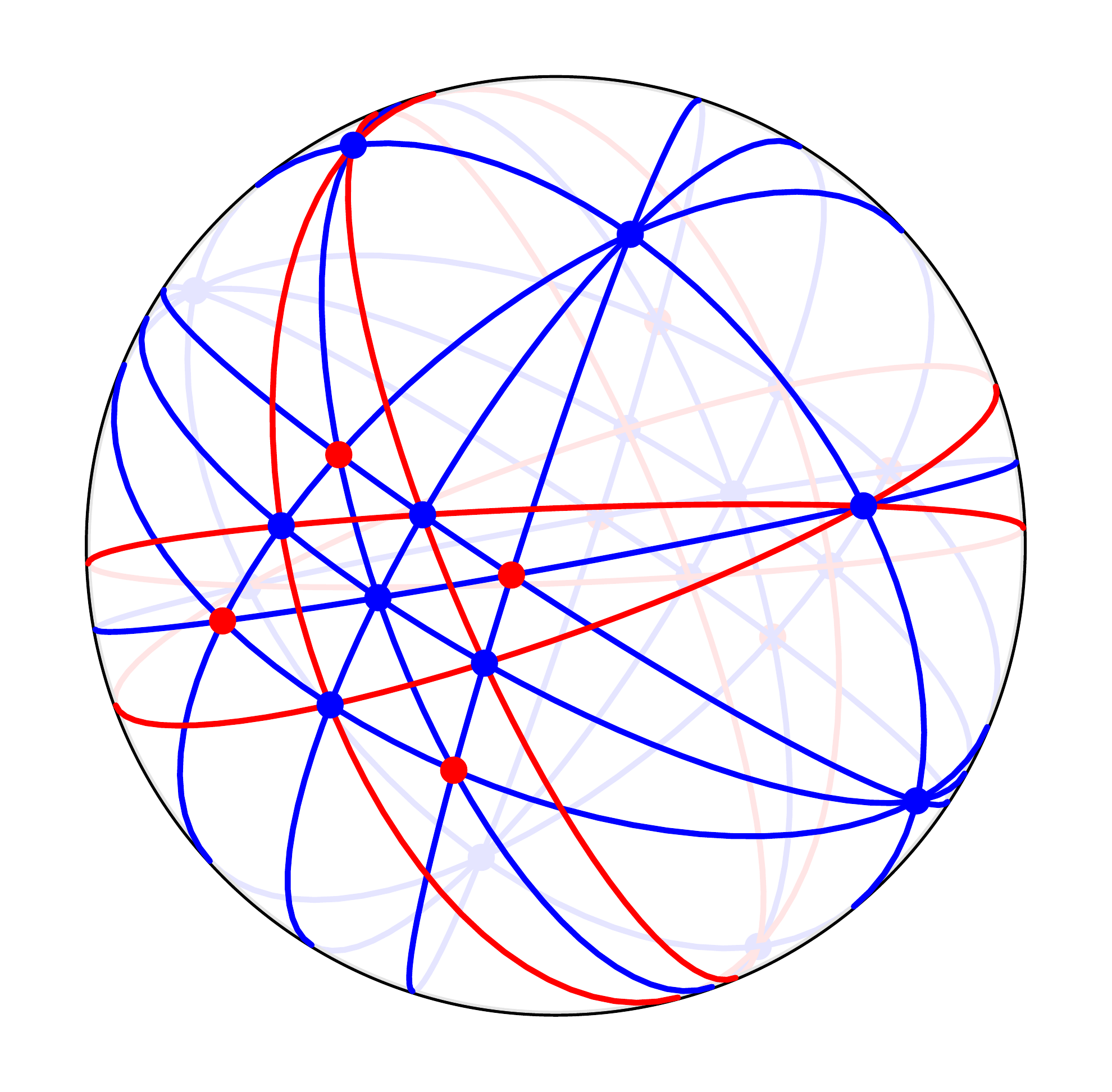}\hspace{-.5cm}
  \includegraphics[width=.35\textwidth]{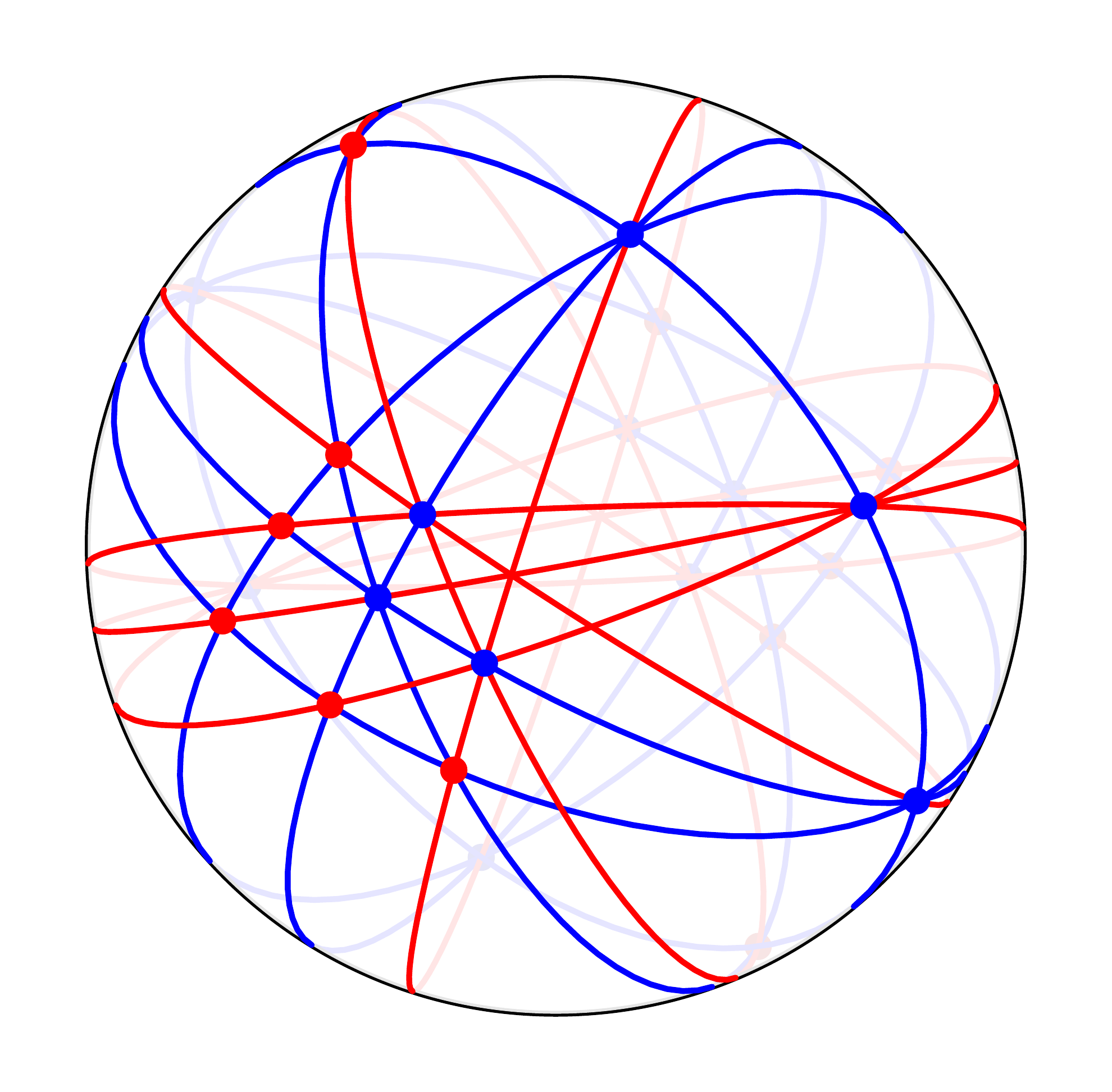}\hspace{-.5cm}
  \includegraphics[width=.35\textwidth]{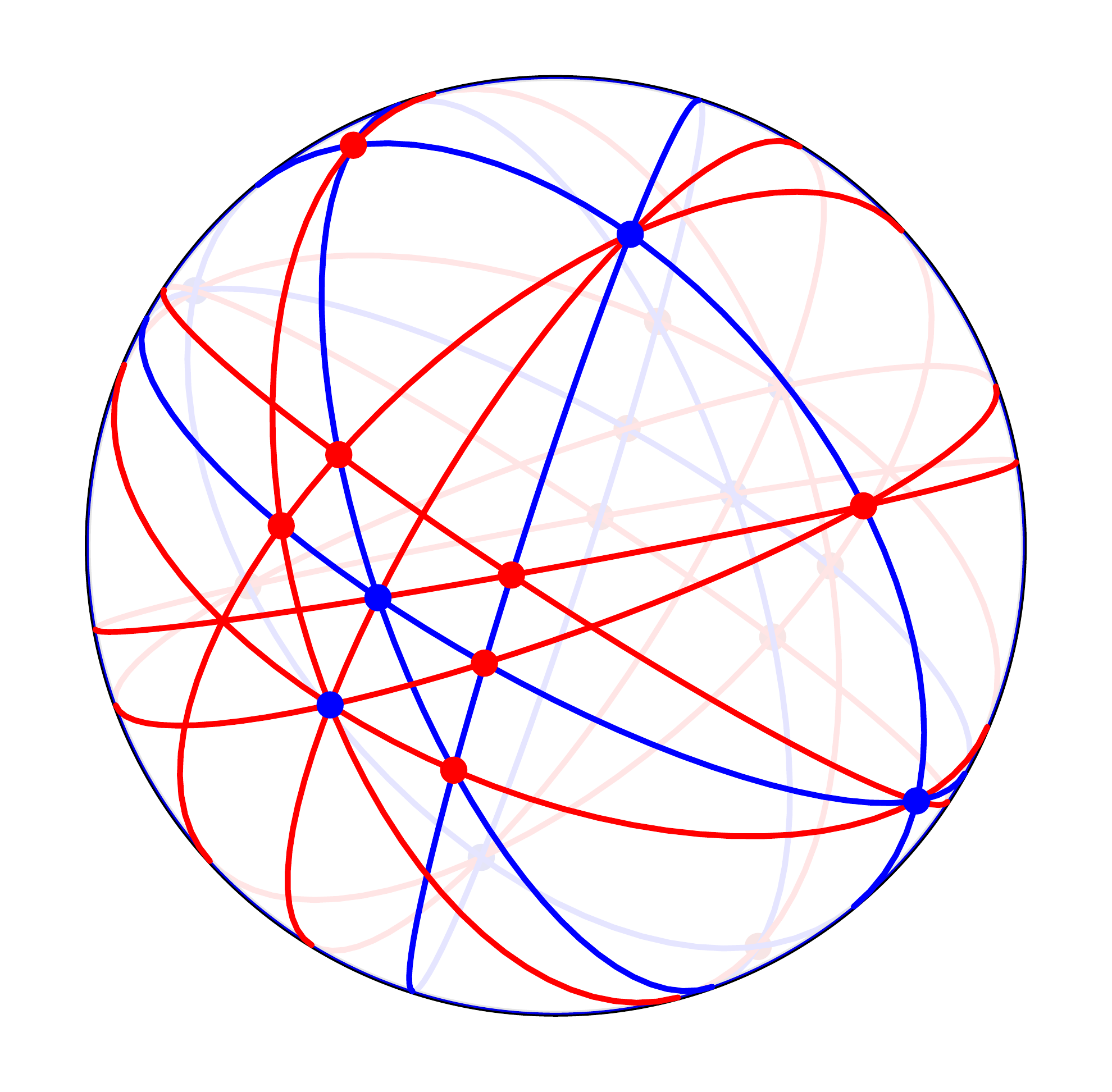}
  }
  \vspace{-1cm}
  \centerline{
  \includegraphics[width=.35\textwidth]{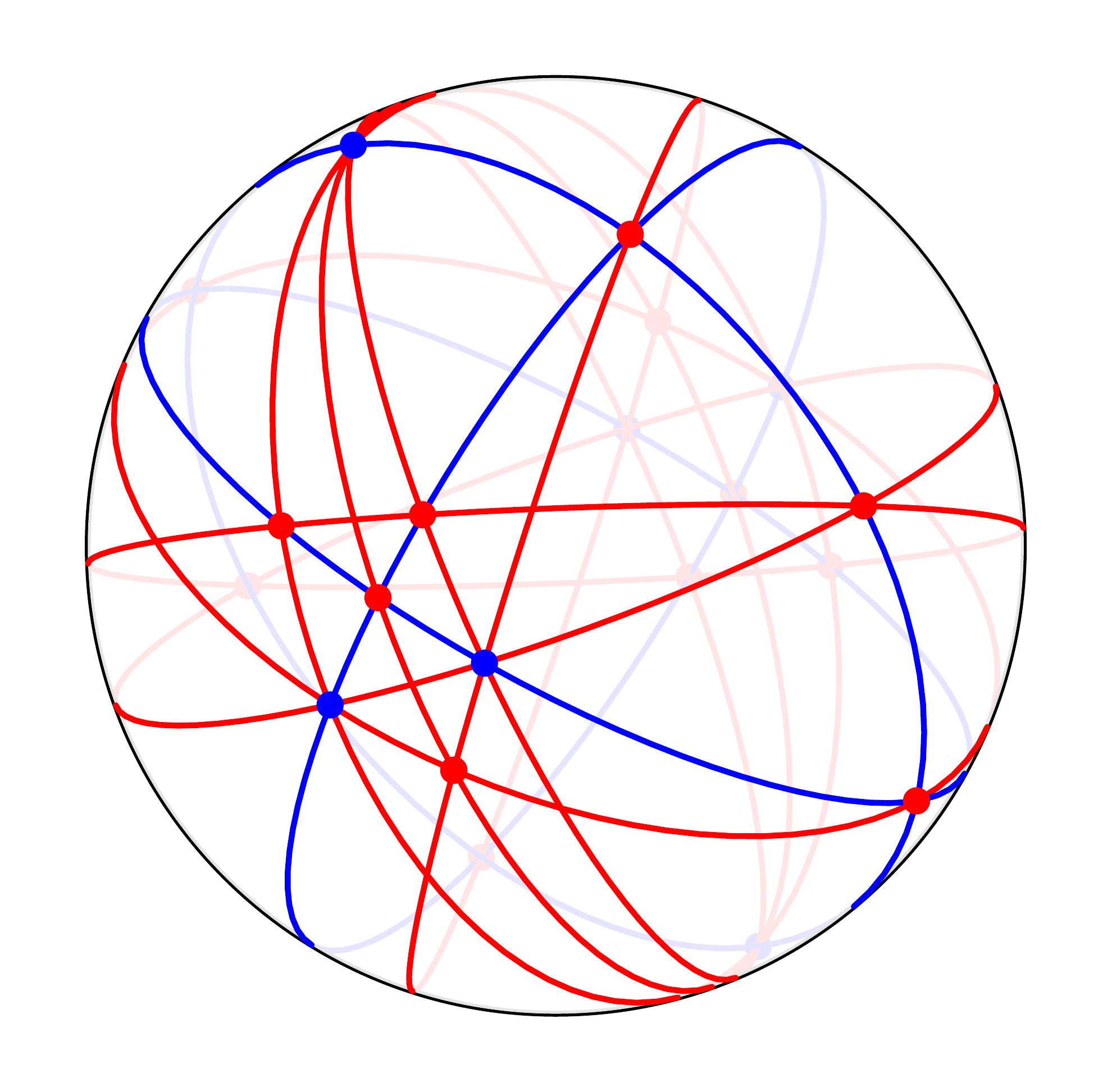}\hspace{-.5cm}
  \includegraphics[width=.35\textwidth]{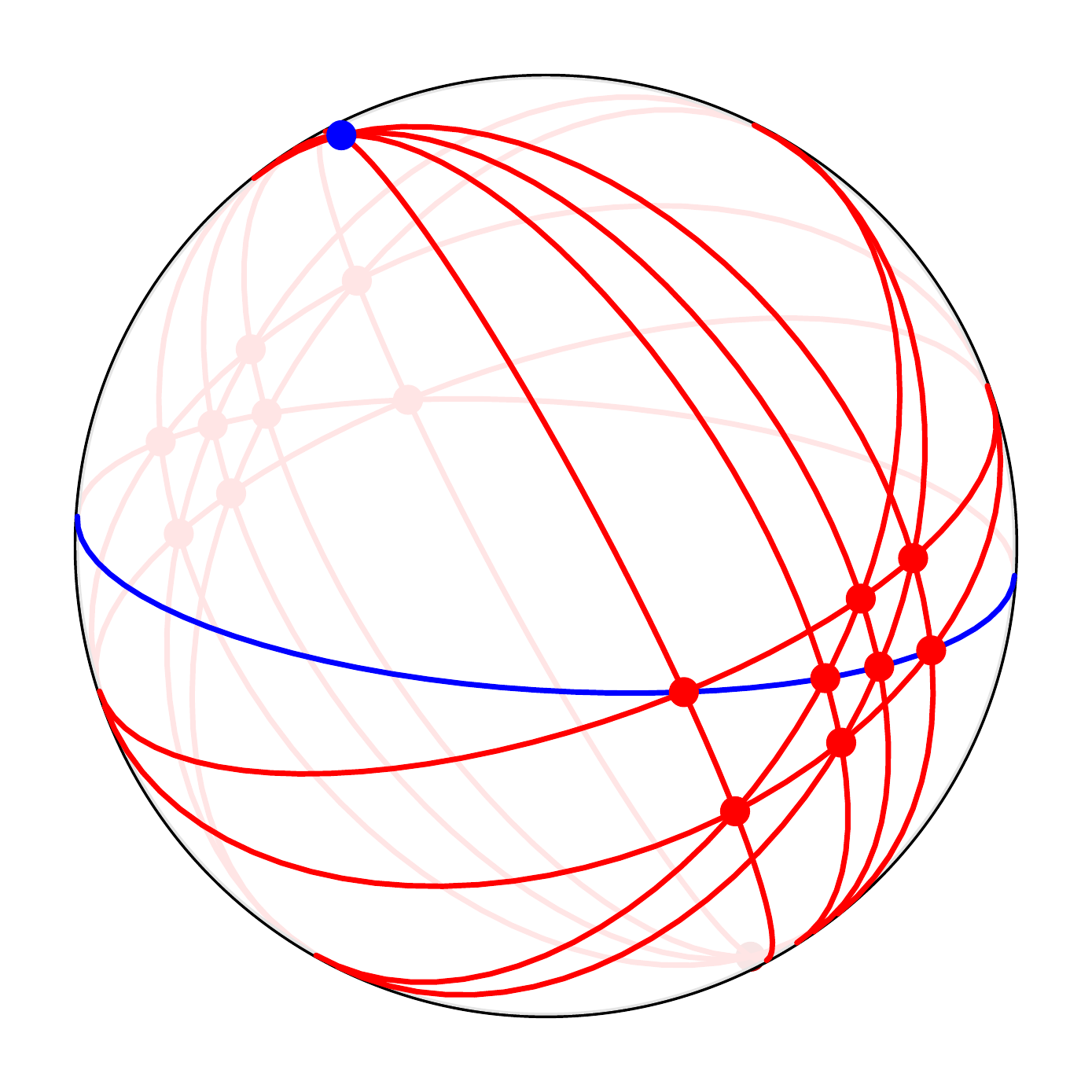}
  }
  \caption{Optimal $(n_{3|4})$ configurations, for~$n = 13, 12, 11, 10, 9$. They have respectively $48$, $42$, $37$, $33$, and $28$ point\,--\,line incidences. The $3$-valent elements are colored red while the $4$-valent elements are colored blue.}
  \label{fig:optimal}
\end{figure}

For~$n = 10, 11$ or~$12$, we obtain $(n_{3|4})$ configurations by removing suitable points and lines in our~$(13_{3|4})$ configuration. The resulting configurations are illustrated in Figure~\ref{fig:optimal}. (Note that for $n=10$, we even have two dual ways to suitably remove three points and three lines from our $(13_{3|4})$ configuration: either we remove three $3$-valent points and the three $4$-valent lines containing two of these points, or we remove three $3$-valent lines and the three $4$-valent points contained in two of these lines). Finally, for~$n=9$ we use the bottom rightmost $(9_{3|4})$ configuration of Figure~\ref{fig:optimal}.
\end{proof}

As a curiosity, we give another example of an optimal $(12_{3|4})$ configuration which contains Pappus' and Desargues' configurations simultaneously. See Figure~\ref{fig:optimal12PappusDesargues}.

\begin{figure}
  \centerline{\includegraphics[width=1\textwidth]{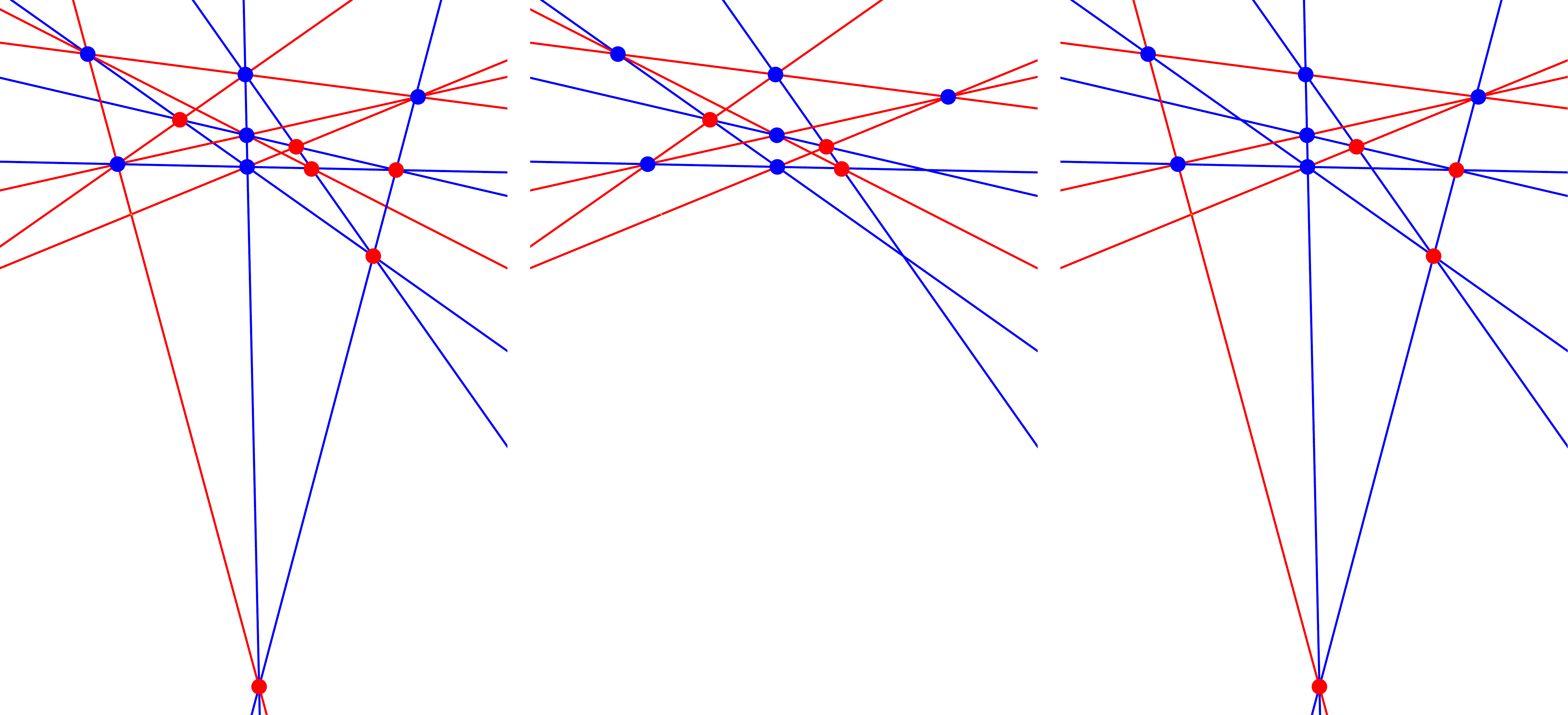}}
  \caption{An optimal $(12_{3|4})$ configuration (left) which contains simultaneously Pappus' configuration (middle) and Desargues' configuration (right). In the $(12_{3|4})$ configuration, the $3$-valent elements are colored red while the $4$-valent elements are colored blue. In the Pappus' and Desargues' subconfigurations, all elements are $3$-valent, but we keep the color to see the correspondence better.}
  \label{fig:optimal12PappusDesargues}
\end{figure}

Observe that optimal $(n_{3|4})$ configurations are given by $(n_4)$ configurations for large~$n$, and that the only remaining cases for optimal $(n_{3|4})$ configurations are for~$n = 14, 15, 16, 17, 19, 22, 23$, and~$26$. We have represented in Figure~\ref{fig:seemOptimal} some $(15_{3|4})$ and $(16_{3|4})$ configurations which we expect to be optimal, although they do not reach the theoretical upper bound of Corollary~\ref{coro:optimal}. Observe also that deleting the circle in Figure~\ref{fig:38_4}\,(left) and adding one line through two of the resulting $3$-valent points provides a $(19_{3|4})$ configuration with $74$ incidences, which is almost optimal since there is no $(19_4)$ configuration~\cite{BokowskiPilaud, BokowskiPilaud2}. To conclude, we thus leave the following question open.

\begin{question}
What are the optimal $(14_{3|4})$ configurations? Are the $(15_{3|4})$ and $(16_{3|4})$ configurations in Figure~\ref{fig:seemOptimal} optimal? Is there a $(19_{3|4})$ configuration with $75$ incidences.
\end{question}

\begin{figure}[h]
  \centerline{\includegraphics[width=.5\textwidth]{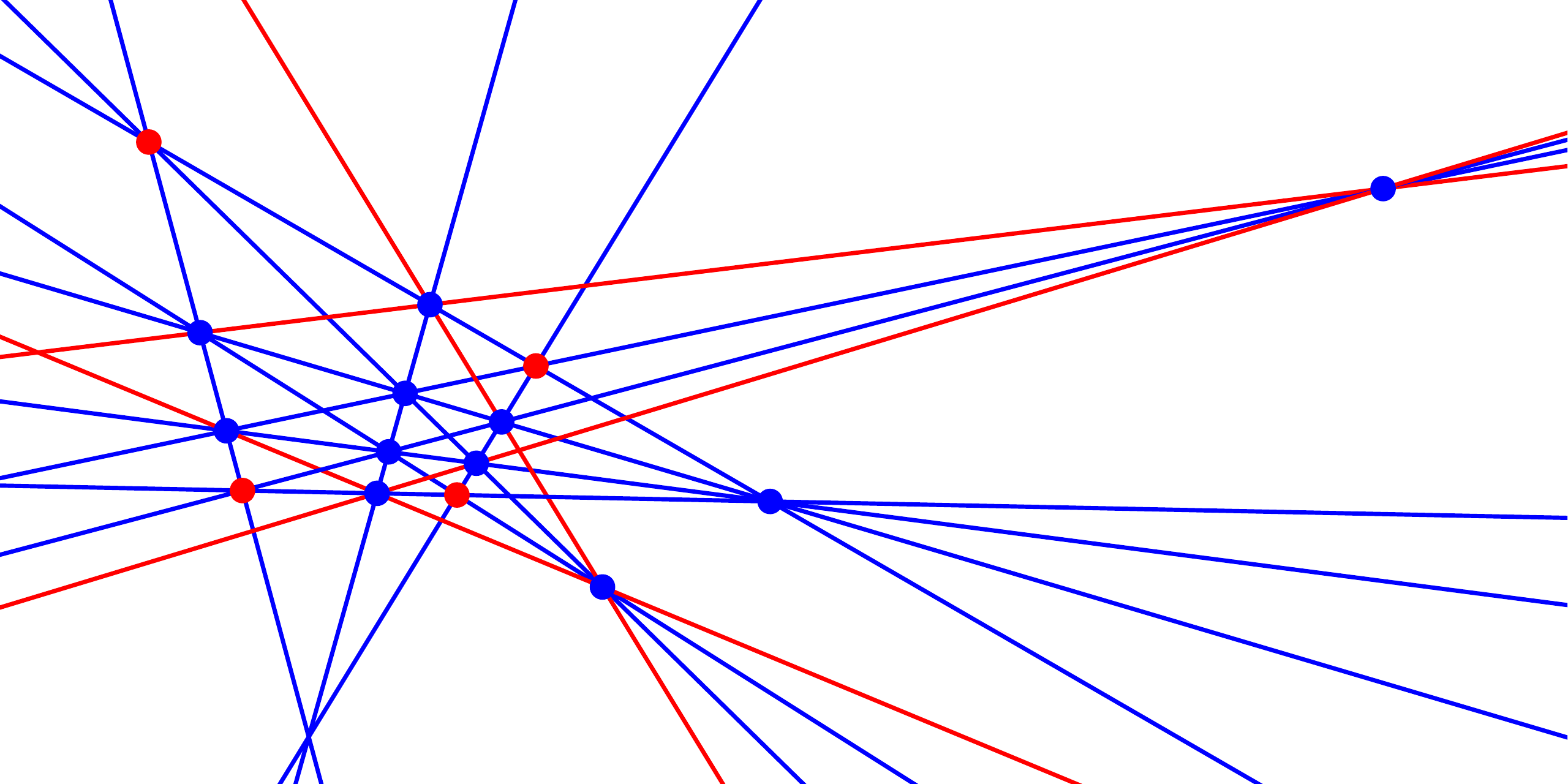}\hspace{.1cm}\includegraphics[width=.5\textwidth]{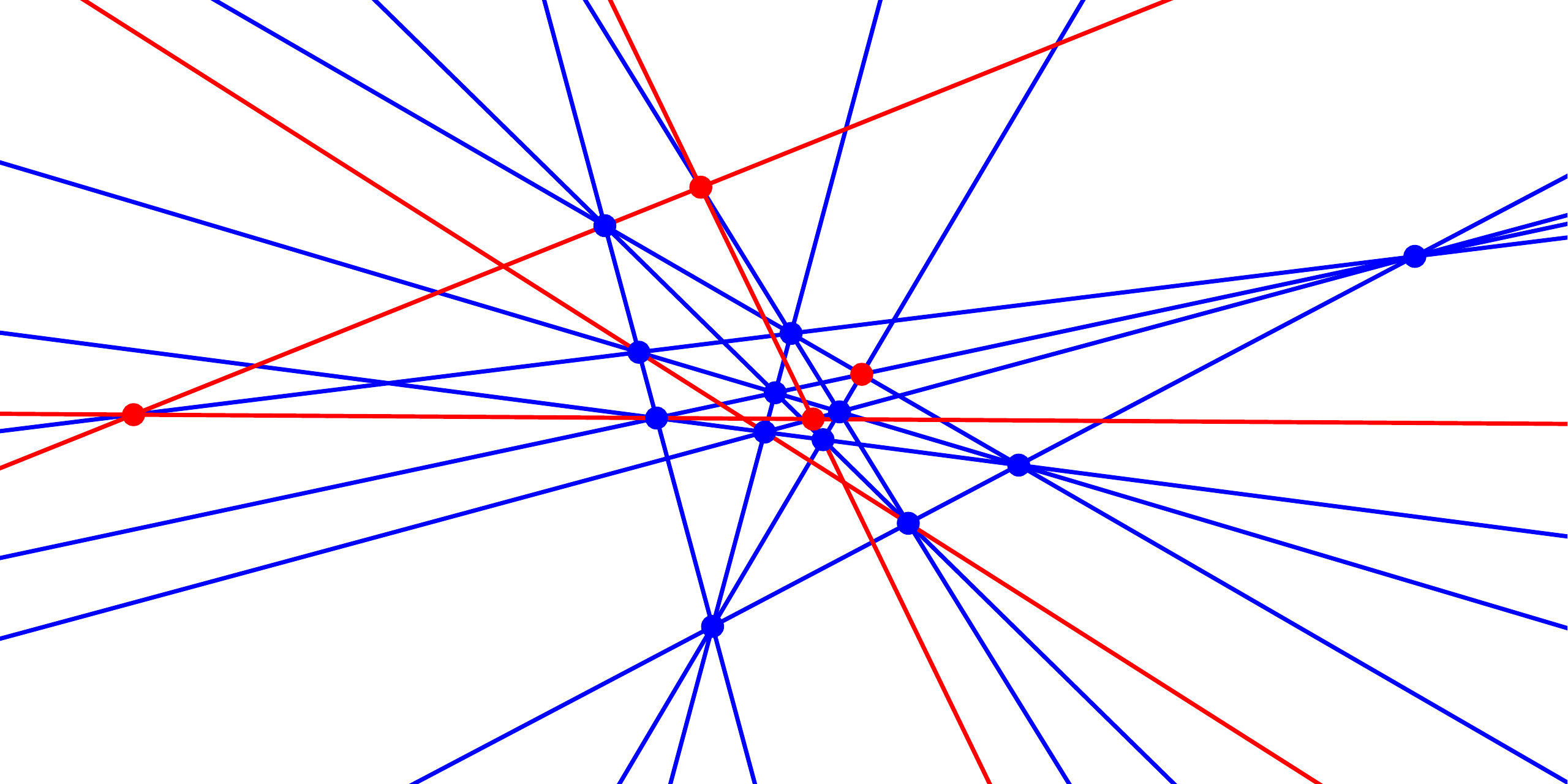}}
  \caption{Apparently optimal $(15_{3|4})$ and $(16_{3|4})$ configurations. They have $56$ and $60$ point\,--\,line incidences respectively. The $3$-valent elements are colored red while the $4$-valent elements are colored blue.}
  \label{fig:seemOptimal}
\end{figure}

%%%%%%%%%%%%%%%%%%%%%%%%%%%%%%
%%%%%%%%%%%%%%%%%%%%%%%%%%%%%%

\bibliographystyle{alpha}
\bibliography{QuasiConfigurationsAsBuildingBlocks.bib}

\end{document}